\newtheorem{theorem}{Theorem}[section]
\newcommand\norm[1]{\left\lVert#1\right\rVert}
\newcommand{\diag}{\rm diag}
\newcommand{\Var}{\rm Var}
\newcommand{\argmax}{\rm argmax}
\let \tilde \widetilde
\let \hat \widehat
\title{Adjusted Blockwise Empirical Likelihood Ratio\\ Confidence Region With Proper Finite Sample \\Coverage Accuracy For Weakly Dependent Data}
\author[a]{Guangxing Wang\thanks{\noindent Corresponding author.\newline \textcolor{white}{****} E-mail address: kenwang@ucdavis.edu.}}
\author[b]{Wolfgang Polonik}
\affil[a]{Division of Public Health Sciences, Fred Hutchinson Cancer Research Center, Seattle, WA 98115 USA}
\affil[b]{Department of Statistics, University of California, Davis, Davis, CA 95616 USA}
\providecommand{\keywords}[1]{\textbf{\textit{Keywords:}} #1}
\begin{document}
\maketitle

\singlespacing
\begin{abstract}
    {\color{black}It is well known that the empirical likelihood ratio confidence region suffers finite sample under-coverage issue, and this severely hampers its application in statistical inferences.} The root cause of this under-coverage is an upper limit imposed by the convex hull of the estimating functions that is used in the construction of the profile empirical likelihood. For i.i.d data, various methods have been proposed to solve this issue by modifying the convex hull, but it is not clear how well these methods perform when the data are no longer independent. In this paper, we propose an adjusted blockwise empirical likelihood that is designed for weakly dependent multivariate data. We show that our method not only preserves the much celebrated asymptotic $\chi^2-$distribution, but also improves the finite sample coverage probability by removing the upper limit imposed by the convex hull. Further, we show that our method is also Bartlett correctable, thus is able to achieve high order asymptotic coverage accuracy. 
\end{abstract}

\keywords{Bartlett error rate; Convex hull constraint; Confidence region coverage accuracy; Empirical likelihood; Weakly dependence}

\section{Introduction}

Empirical likelihood has been studied extensively in the past three decades as a reliable and flexible alternative to the parametric likelihood. Among its numerous attractive properties, the ones that are most celebrated are the asymptotic $ \chi^{2}$-distribution of the empirical likelihood ratio and the ability to use Bartlett correction to improve the corresponding confidence region coverage accuracy. However, despite these desirable properties that parallel the parametric likelihood methods, there is a serious drawback, namely the corresponding confidence region has an under-coverage problem for finite samples of multivariate variables. {\color{black} When this under-coverage happens, usually with a relatively small sample size given a fixed number of parameters, an empirical likelihood ratio confidence region seemingly constructed for a coverage probability of say $95\%$ will in effect only provide coverage with a probability much less than $95\%$. Therefore, if proper care is not given when using empirical likelihood in statistical inferences, one may arrive at misleading and dubious conclusions, especially in this big data age, when one may easily encounter data with hundreds of parameters and complex structures such as data dependency.} This undesirable feature of empirical likelihood was noticed early on, for example by \cite{Owen1988}. {\color{black} Then over a decade later, \cite{Tsao2004} discovered that the root cause of this under-coverage issue was the convex hull constraint used in the construction of the profile empirical likelihood (cf. Equation \ref{eq: overview profile el} in Section \ref{sec: el overview}).} For independent data, various methods have been proposed to address this issue. They can be divided into two main thrusts, (i) improving the approximation to the limiting distribution of the log empirical likelihood ratio, and (ii) tackling the convex hull constraint, the real culprit. As for (i), among others, \cite{Owen1988} proposed to use a bootstrap calibration, and \cite*{DiCiccio1991} showed that by scaling the empirical likelihood ratio with a Bartlett factor, the limiting coverage accuracy can be improved from $ O(n^{-1}) $ to $ O(n^{-2}) $. For (ii), there exist three major methods aiming to solve this convex hull constraint, namely the penalized empirical likelihood by \cite{Bartolucci2007}, the adjusted empirical likelihood by \cite*{Chen2008}, and the extended empirical likelihood by \cite{Tsao2013}. These three methods then have been extended and refined by subsequent research. For instance, \cite{Zhang2016} extended the penalized empirical likelihood to weakly dependent data by using a fixed-b blockwise method. \cite{Emerson2009} proposed a modified adjusted empirical likelihood method. \cite{Liu2010} showed that by choosing the tuning parameter in the adjusted empirical likelihood in a specific way, it is possible to achieve the Bartlett corrected coverage error rate. \cite{Chen2013} studied the finite sample properties of the adjusted empirical likelihood and discussed a generalized version of the method proposed in \cite{Emerson2009}.

It is worth pointing out that most of existing works have focused on independent data, and the aforementioned \cite{Zhang2016} was the first paper to address the convex hull constraint for weakly dependent data with penalized empirical likelihood \citep{Bartolucci2007} under the blockwise framework, which was introduced to empirical likelihood by \cite{Kitamura1997}. \cite*{PiyadiGamage2017} studied the adjusted empirical likelihood for time series models with the frequency domain empirical likelihood (FDEL) and recently \cite*{Chen2020BartlettVariance} studied the Bartlett corrected FDEL with unknown innovation variance. These recent developments on FDEL are no doubt important contributions to empirical likelihood methods for time series data; however, as pointed out by \cite{Nordman2014}, the FDEL is restricted to a class of normalized spectral parameters and assumes the time series has a linear representation in terms of an iid sequence of mean-zero innovations and a linear filter sequence (see the model setups in \citeauthor{PiyadiGamage2017}, \citeyear{PiyadiGamage2017} and \citeauthor{Chen2020BartlettVariance}, \citeyear{Chen2020BartlettVariance} for an example). Therefore, the FDEL is inapplicable to other parameters such as the process mean, parameters in regression settings, and a broad range of other applications. In addition, unlike \cite{PiyadiGamage2017}, \cite{Chen2020BartlettVariance} did not consider the finite sample coverage upper bound issue caused by the convex hull constraint; thus, the finite sample performance of their method relying solely on Bartlett correction is unclear when the parameter dimension is relatively large with respect to a given sample size. In this paper, we propose an adjusted blockwise empirical likelihood (ABEL) for weakly dependent data targeting a broad range of applications beyond the FDEL. We show that  ABEL preserves the much celebrated asymptotic $ \chi^{2}$-distribution and it has a neat theoretical justification from the Bartlett correction point of view. More importantly, ABEL is free from the finite sample convex hull constraint, thus is designed to provide proper coverage probability even when the parameter dimension is relatively large compared to a given sample size. ABEL mirrors the adjusted empirical likelihood in the independent setting; however, the application of ABEL in the weakly dependent setting is far from trivial due to the data blocking used to deal with data dependency. In particular, it requires meticulous calculations (see Section \ref{sec: adj proofs} and Appendix S1) to ensure the correct convergence rates of many different parts in arriving at the coverage error rate that parallels the Bartlett corrected error rate, which is the golden theoretical bench mark.

This paper is organized as follows. Section \ref{sec: el overview} gives a brief introduction to the empirical likelihood method and its convex hull constraint problem. Basic notation used throughout the paper is also established in this section. Section \ref{sec: adj bel} introduces ABEL along with its asymptotic properties. In Section \ref{sec: adj bart tune}, we show that ABEL ratio confidence region can be theoretically justified by a smaller coverage probability error that is on par with the Bartlett corrected error rate. In Section \ref{sec: adj simulation}, we demonstrate the performance of the ABEL method through a simulation study. In section \ref{sec: applicatioin}, we demonstrate the application of ABEL to an economic dataset under the regression setting. Proofs of the theoretical results are relegated to Section \ref{sec: adj proofs}. Detailed calculations for some of the equations in Section \ref{sec: adj proofs} and additional simulation results are given in the online supplement Sections S1 and S2 respectively.

\section{Empirical likelihood and the convex hull constraint} \label{sec: el overview}

We begin with a brief introduction to the empirical likelihood (EL) methods and the convex hull problem that causes the EL ratio confidence region's poor coverage probability with finite sample. For a comprehensive review of the empirical likelihood methodology, we refer to \cite{Owen2001}.  Let $ x_{1}, \dotsc, x_{n} \in \mathbb{R}^{m}$ be i.i.d random samples from an unknown distribution $ F(x) $ and  $ \theta \in \mathbb{R}^{p} $ be the parameter of interest. Let $ g(x; \theta): \mathbb{R}^{m+p} \mapsto \mathbb{R}^{q}$ be a $q$-dimensional estimating function, such that $ \mathbb{E}[g(x;\theta_{0})]  = 0 $, where $ \theta_{0} $ is the true parameter. One of the advantages of the empirical likelihood is that additional information about the parameter can be incorporated through the estimating equations (\citeauthor{Qin1994}, \citeyear{Qin1994}). In other words, we can have $ q \geq p $. The profile empirical likelihood is defined as 
\begin{equation} \label{eq: overview profile el}
EL_{n}(\theta) = \sup_{p_{i}} \left\{\prod_{i = 1}^{n} p_{i}: \sum_{i = 1}^{n} p_{i} = 1, p_{i} \geq 0, \sum_{i = 1}^{n} p_{i} g(x_{i}; \theta) = 0 \right\}.
\end{equation}

If the supremum is taken over the empty set, then by convention, one sets $EL_n(\theta) = -\infty$.  The profile empirical likelihood ratio is defined as $ ELR_{n}(\theta) = n^{n} EL_{n}(\theta).$ Under regularity conditions, see for example \cite{Qin1994}, it can be shown that $-2 \log ELR_{n}(\theta_{0}) \rightarrow_{d} \chi^{2}_{q} \text{ as } n \rightarrow \infty, $ and an asymptotic $ (1 - \alpha)\cdot100\% $ empirical likelihood confidence region for $ \theta $ is given by
$\{\theta: -2 \log ELR_{n}(\theta) < \chi^{2}_{q, 1-\alpha}\}.$ These results are the most celebrated properties of the empirical likelihood, paralleling their parametric counterpart \citep{Wilks1938TheHypotheses}. Despite these advantages, it {\color{black} was} noted by \cite{Owen1988}, that the empirical likelihood confidence region {\color{black}often under-covers for finite samples}. \cite{Tsao2004} studied the least upper bounds on the coverage probabilities by using the fact that $ EL_{n}(\theta) $ is finite if and only if $ 0 $ is in the convex hull $ \mathcal{H} $ of \{$ g(x_{i};\theta), i = 1, \dotsc, n \} $, showing that the empirical likelihood confidence region coverage probability is upper bounded by the probability of 0 being in $ \mathcal{H} $. Furthermore, \cite{Tsao2004} demonstrated that this upper bound is affected by sample size and parameter dimension in such a way that if the parameter dimension is comparable to the sample size, then the upper bound goes to 0 as the sample size goes to infinity. This not only explains the root of the under-coverage issue, but it also shows the severity of the upper bound problem when the sample size is small compared to the parameter dimension. 

{\em Dependent data:} In addition to the convex hull constraint, another difficulty facing the empirical likelihood given by \eqref{eq: overview profile el} is the fact that it fails to self-normalize under dependent data (the implicit covariance structure fails to account for the dependency amount data). As a result, inference based on the asymptotic $\chi^2-$distribution is no longer valid \citep[cf.][]{Kitamura1997}. Two different approaches were proposed to extend empirical likelihood to dependent data. One is the aforementioned FDEL originally introduced by \cite{Monti1997EmpiricalModels} and a more general approach is the blockwise empirical likelihood (BEL) proposed by \cite{Kitamura1997}. Many variants of these two approaches have been proposed since then, with the FDEL variants focusing on the somewhat limited time series setting and the BEL variants mostly focusing on the choices of data blocking. However, as pointed out by \cite{Zhang2016}, the BEL approach in general is not free from the convex hull constraint, thus suffers from the finite sample coverage bound issue. Intuition dictates that the finite sample coverage bound issue will be severe for BEL than for the original EL in the independent setting because the data blocking effectively reduces sample size thus reducing the probability of the convex hull containing $0$. \cite{Zhang2016} proposed to use a penalized approach with the BEL that effectively breaks the convex hull constraint; however, their test statistic has a pivotal but non-traditional asymptotic distribution that requires to be simulated in practice. In addition, it is unclear whether the penalized BEL is able to achieve the Bartlett corrected error rate. In the next section, we introduce the ABEL that is free from the finite sample convex hull constraint, has a test statistic with a conventional and easy to use asymptotic $\chi^2-$distribution, and is able to achieve a confidence region coverage error rate that rivals the Bartlett corrected rate. The newly proposed ABEL is general enough to be applicable under most of the BEL variants with little or no modifications. For ease of conveying the basic idea, we focus on the data blocking scheme originally proposed by \cite{Kitamura1997} when introducing ABEL. In the simulation and data application, we demonstrate how ABEL can work with other BEL variants. In particular, an automatic data blocking scheme proposed by \cite{Kim2013} is applied with the ABEL framework.

\section{Adjusted blockwise empirical likelihood} \label{sec: adj bel}

We begin by introducing some additional notations used in the data blocking, which ensures the proposed ABEL self-normalizes properly. Let $ M, L, \text{ and } Q = \lfloor (n-M)/L \rfloor + 1 $ be the block length, the gap between block starting points, and the number of blocks respectively, where $ M \rightarrow \infty$ as $ n \to \infty$, and $L \leq M$. Define the blockwise estimating equations as $T_{i}(\theta) := 1/M \sum_{k = 1}^{M} g(x_{(i-1)L + k }; \theta),$ which is an average of the estimating equations within each data block. As mentioned earlier, various blocking schemes exist and we are focusing on the most generic one here for ease of introduction. However, it is clear that no matter what blocking scheme is used, the probability of $0$ being in the convex hull of $ \{T_{i}(
\theta), \dotsc, T_{Q}(\theta)\} $ is not increased, and in many cases, it is greatly reduced. In other words, the finite sample upper bound on the likelihood ratio confidence region's coverage probability is not improved. To solve this problem, we construct a pseudo blockwise estimating equation as follows

\begin{equation} \label{eq: adj block extra point}
    T_{Q+1}(\theta) := -a \overline{T}(\theta),
\end{equation}
where $ \overline{T}(\theta) = \frac{1}{Q}  \sum_{i=1}^{Q} T_{i}(\theta)$ and $ a > 0 $. Now it is easy to see that with this extra blockwise estimating equation the convex hull of $ \{T_{1}(\theta), \dotsc, T_{Q}(\theta), T_{Q+1} (\theta)\} $ will always contain $ 0 $, and this effectively eliminates the nontrivial coverage upper bound for the confidence region based on the ABEL ratio defined below. The idea of adding the pseudo blockwise estimating equation is similar to the adjusted empirical likelihood proposed by \cite{Chen2008} for i.i.d data. Unlike in the i.i.d setting, here we have to deal with the extra difficulties imposed by data dependency and data blocking and have to be careful with the interplay between the rate of the tuning parameter $ a $ and the rate of the block length $ M $ (cf. Section \ref{sec: adj proofs} for more details). 

Now, we define the adjusted blockwise empirical likelihood with $ T_{1}, \dotsc, T_{Q}, T_{Q+1} $ as
\begin{equation} \label{eq: adj blockwise el}
    ABEL_{Q}(\theta) = \sup_{p_{i}} \left\{\prod_{i = 1}^{Q+1} p_{i}: p_{i} \geq 0, \sum_{i = 1}^{Q+1} p_{i} =1, \sum_{i = 1}^{Q+1} p_{i} T_{i}(\theta) = 0 \right\},
\end{equation}
and with a standard Lagrange argument, the log adjusted empirical likelihood ratio becomes
\begin{equation} \label{eq: adj blockwise elr}
    ABELR_{Q} (\theta) = \log \frac{ABEL_{Q}(\theta)}{(Q+1)^{-(Q+1)}} = - \sum_{i=1}^{Q+1} \log [1+\lambda_a^{\top}(\theta) T_{i}(\theta)],
\end{equation}
where $\lambda_{a}(\theta) \in \mathbb{R}^q$ is the vector of Lagrange multipliers satisfying $0 = \sum_{i=1}^{Q+1} \{T_{i}(\theta)\}/\{1+\lambda_a^{\top}(\theta)T_{i}(\theta)\}.$ We assume $\sum_{i=1}^{Q+1} T_i(\theta) T_i^\top (\theta)$ to have full rank, then $\lambda_a(\cdot)$ is continuously differentiable. For details on the basic properties of empirical likelihood, we refer to \cite{Owen1990}, \cite{Qin1994}, and \cite{Kitamura1997}. In the rest of this paper, we omit the dependence on $\theta$ and write $\lambda_a$ unless ambiguity arises. 

{\color{black}The addition of $T_{Q+1}(\theta)$ effectively eliminates the convex hull constraint and allows ABEL to be well defined for any $\theta$. In particular, this simple yet effective modification does not require additional conditions compared to BEL by \cite{Kitamura1997}, except a mild assumption on how fast the adjustment parameter $ a $ can increase with sample size in order to maintain the asymptotic properties. For ease of reference, we list the conditions for BEL before we present our theoretical results.} A detailed discussion of these assumptions can be found in \cite{Kitamura1997}. They are generalizations from the assumptions used in the i.i.d setting (\citeauthor{Qin1994}, \citeyear{Qin1994}) to the weakly dependent setting. The main assumptions are on the continuity and differentiability of the estimating function $g(\cdot,\cdot)$ around the true parameter of interest; so that the remainder terms in the Taylor expansion of the log empirical likelihood ratio are controlled, and that the dominating term converges to a $\chi^2-$distribution. We assume that $ x_{i}, i = 1, \dotsc, n $ is a sample from a stationary stochastic process $ \{X_{i}\} $ that satisfies a strong mixing assumption:
\begin{equation} \label{eq: adj strong mixing 1}
    \alpha_{X}(k) \rightarrow 0 \text{ as } k \rightarrow \infty,
\end{equation}
where $ \alpha_{X}(k) = \sup_{A,B} |P(A \cap B) - P(A) P(B)|, A \in \mathcal{F}^{0}_{-\infty}, B \in \mathcal{F}^{\infty}_{k} $, and $ \mathcal{F}^{n}_{m} = \sigma(X_{i}, m \leq i \leq n) $ denotes the $\sigma$-algebra generated by $X_i,\ m \leq i \leq n$. Further, assume that for some $c >1$,
\begin{equation}\label{eq: adj strong mixing 2}
    \sum_{k=1}^{\infty} \alpha_{X}(k) ^{1-1/c} < \infty.
\end{equation}

\begin{enumerate}
	\item[A.1] The parameter space $ \Theta \subset \mathbb{R}^{p} $ is compact.
	\item[A.2] $ \theta_{0} $ is the unique root of $ \mathbb{E} g(x_{i}; \theta) = 0 $.
	\item[A.3] For sufficiently small $ \delta >0 \ \text{ and } \ \eta >0, \mathbb{E} \sup_{\theta^{*} \in \Gamma(\theta, \delta) }  \norm{g(x_{t};\theta^{*})}^{2(1+\eta)} < \infty, \text{ for all } \theta \in \Theta$, where $\Gamma(\theta, \delta)$ is an open ball with midpoint $\theta$ and radius $\delta$.
	\item[A.4] $\lim_{\theta_j \to \theta}g(x, \theta_j) = g(x,\theta)$ is continuous for all $x$ except for a null set, which may vary with $ \theta $.
	\item[A.5]  $ \theta_{0} $ is an interior point of $ \Theta, $ and $ g(x; \theta) $ is twice continuously differentiable at  $ \theta_{0} $.
	\item[A.6] $ \Var(\frac{1}{\sqrt{n}}\sum_{i=1}^{n} g(x_{i}; \theta_{0})) \rightarrow S \in \mathbb{R}^{q \times q}, S >0, \ \text{ as } \ n \rightarrow \infty $. 
	\item[A.7] With $ c >1 $ from (\ref{eq: adj strong mixing 2}), we have $ \mathbb{E} \norm {g(x; \theta_{0})}^{2c} < \infty $. Also, there exists $K < \infty$ with $\mathbb{E} \sup\limits_{\theta^{*} \in \Gamma(\theta_{0}, \delta) }  \norm{g(x;\theta^{*})}^{2+\epsilon} < K;$  
	$\mathbb{E} \sup\limits_{\theta^{*} \in \Gamma(\theta_{0}, \delta) } \norm{\frac{\partial g(x;\theta^{*})}{\partial \theta^{\top}}}^{2}<K $ and 
	$ \mathbb{E} \sup\limits_{\theta^{*} \in \Gamma(\theta_{0}, \delta) } \norm{\frac{\partial^{2} g_{j}(x;\theta^{*})}{\partial \theta \partial \theta^{\top}}}<K, \text{ where } g_{j}(x;\theta) $ is the $j$th component of $ g(x; \theta) $. Moreover, $M \to \infty,$ with $M = o(n^{1/2-1/(2+\epsilon)}),$ for some $\epsilon >0 $.
	\item[A.8]  $ D = \mathbb{E} \frac{\partial g(x; \theta_{0})}{\partial \theta^{\top}} $ is of full rank.
\end{enumerate}

The following theorem then shows that under the above assumptions, the ABEL ratio has an asymptotic $\chi^{2}$-distribution.

\begin{theorem} \label{thm: adj asymptotic chi}
	Suppose that assumptions A.1-A.8 and the strong mixing condition \eqref{eq: adj strong mixing 2} hold. If $ a = o(n/M) $, then with the true parameter $\theta_0$,
	\begin{equation*}
	    -2 \frac{n}{QM}ABELR_{Q}(\theta_{0}) \rightarrow_{d} \chi^{2}_{q}, \text{ as } n \rightarrow \infty.
	\end{equation*}
\end{theorem}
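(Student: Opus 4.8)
The plan is to reduce the analysis of the adjusted ratio \eqref{eq: adj blockwise elr} to that of the ordinary blockwise empirical likelihood of \cite{Kitamura1997}, by showing that the single pseudo-point $T_{Q+1}(\theta_0) = -a\overline{T}(\theta_0)$ perturbs every relevant quantity by a negligible amount precisely when $a = o(n/M)$. Throughout I evaluate at $\theta_0$ and suppress it, writing $\overline{T} = \frac{1}{Q}\sum_{i=1}^{Q} T_i$, $\hat S = \frac{1}{Q}\sum_{i=1}^{Q} T_i T_i^{\top}$, and for the augmented collection $\overline{T}_a = \frac{1}{Q+1}\sum_{i=1}^{Q+1} T_i$ and $\hat S_a = \frac{1}{Q+1}\sum_{i=1}^{Q+1} T_i T_i^{\top}$. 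Three facts from the blockwise framework, valid under A.1--A.8 and the mixing condition \eqref{eq: adj strong mixing 2}, will serve as pillars: (i) a maximal inequality giving $\max_{1\le i\le Q}\norm{T_i} = o_p(\sqrt{n}/M)$, which is where the block-length restriction $M = o(n^{1/2-1/(2+\epsilon)})$ in A.7 is consumed; (ii) a law of large numbers giving $M\hat S \to_p S$, so the eigenvalues of $\hat S$ are of exact order $M^{-1}$; and (iii) the blockwise central limit theorem $\sqrt{n}\,\overline{T} \to_d N(0,S)$. These are exactly the ingredients that drive Kitamura's argument; the work is to show they survive the adjustment.

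First I record the effect of the pseudo-point on the two averages. Since $\sum_{i=1}^{Q+1} T_i = (Q-a)\overline{T}$ and $\sum_{i=1}^{Q+1} T_i T_i^{\top} = Q\hat S + a^2\overline{T}\,\overline{T}^{\top}$, the hypothesis $a = o(n/M) = o(Q)$ (using $Q \asymp n/L$ with $L \le M$) yields $\overline{T}_a = (1+o(1))\overline{T}$. For the covariance, $\frac{a^2}{Q+1}\norm{\overline{T}}^2 = O_p(a^2/(Qn)) = o_p(M^{-1})$, because $a^2 M/(Qn) = o(n/(MQ)) = o(L/M) = o(1)$; hence the rank-one adjustment is negligible against the order-$M^{-1}$ eigenvalues of $\frac{Q}{Q+1}\hat S$, and $M\hat S_a \to_p S$ as well. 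Next I bound the Lagrange multiplier $\lambda_a$ of \eqref{eq: adj blockwise elr}. Writing $\lambda_a = \rho u$ with $\norm{u}=1$ and using $p_i = \{(Q+1)(1+\lambda_a^{\top}T_i)\}^{-1} > 0$ in the defining equation $0 = \sum_{i=1}^{Q+1} T_i/(1+\lambda_a^{\top}T_i)$, the standard Owen-type inequality gives $\rho\{\sigma_{\min}(\hat S_a) - \norm{\overline{T}_a}\max_{i}\norm{T_i}\} \le \norm{\overline{T}_a}$. Here $\max_{1\le i\le Q+1}\norm{T_i}$ also controls the pseudo-point, and $\norm{T_{Q+1}} = a\norm{\overline{T}} = o_p(\sqrt{n}/M)$ by $a = o(n/M)$ together with $\norm{\overline{T}} = O_p(n^{-1/2})$; combined with $\sigma_{\min}(\hat S_a) \asymp M^{-1}$ this yields $\norm{\lambda_a} = O_p(M/\sqrt{n})$.

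With $\norm{\lambda_a} = O_p(M/\sqrt{n})$ and $\max_i\norm{T_i} = o_p(\sqrt{n}/M)$ I obtain $\max_i |\lambda_a^{\top}T_i| = o_p(1)$, and crucially $\lambda_a^{\top}T_{Q+1} = -a\,\lambda_a^{\top}\overline{T} = O_p(aM/n) = o_p(1)$ by Cauchy--Schwarz and $a = o(n/M)$, so the logarithms in \eqref{eq: adj blockwise elr} admit a uniform expansion $\log(1+x) = x - x^2/2 + O(|x|^3)$. Substituting and eliminating the linear term through the defining equation in the usual way produces $-2\,ABELR_Q(\theta_0) = (Q+1)\,\overline{T}_a^{\top}\hat S_a^{-1}\overline{T}_a + o_p(QM/n)$. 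The reductions above for $\overline{T}_a$ and $\hat S_a$ then collapse the adjusted quadratic form onto the unadjusted one, $(Q+1)\overline{T}_a^{\top}\hat S_a^{-1}\overline{T}_a = (1+o_p(1))\,Q\,\overline{T}^{\top}\hat S^{-1}\overline{T}$. Multiplying by $\tfrac{n}{QM}$ and invoking pillar (ii) in the form $\hat S^{-1} = (1+o_p(1))M S^{-1}$ gives
\begin{equation*}
-2\,\frac{n}{QM}\,ABELR_Q(\theta_0) = (1+o_p(1))\,(\sqrt{n}\,\overline{T})^{\top} S^{-1} (\sqrt{n}\,\overline{T}) + o_p(1).
\end{equation*}
Finally, pillar (iii) and the continuous mapping theorem give $(\sqrt{n}\,\overline{T})^{\top}S^{-1}(\sqrt{n}\,\overline{T}) \to_d \chi^2_q$, and Slutsky's theorem completes the proof.

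The main obstacle is not the logarithmic expansion, which is routine, but the simultaneous control of the pseudo-point through every estimate: one must show that $T_{Q+1} = -a\overline{T}$ is negligible in the denominator bound for $\lambda_a$, in the rank-one perturbation of $\hat S_a$, and in the linear term $\lambda_a^{\top}T_{Q+1}$ of the expansion, and each of these asks a slightly different question of the single rate $a = o(n/M)$. Getting the exponents to line up requires tracking the interlocking orders of $a$, $M$, $Q \asymp n/L$, $L$, and $n$ at once --- in contrast to the i.i.d.\ adjusted empirical likelihood of \cite{Chen2008}, where no blocking is present and $\hat S$, $\overline{T}$, and $\max_i\norm{T_i}$ carry no hidden $M$-dependence. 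The self-normalization itself, i.e.\ pillars (ii) and (iii) with the precise $\sqrt{n}\,\overline{T} \to_d N(0,S)$ scaling that makes $n/(QM)$ the correct normalizing factor, rests on Kitamura's mixing-based limit theorems and is where the bulk of the technical care resides (see Section \ref{sec: adj proofs} and Appendix S1).
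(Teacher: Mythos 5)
Your proposal is correct and follows essentially the same route as the paper's proof: bound $\lambda_a = O_p(Mn^{-1/2})$ via the Owen-type inequality (absorbing the pseudo-point into the maximal bound and the $O_p(n^{-1/2}Q^{-1}a)$ perturbation), show that $T_{Q+1} = -a\overline{T}$ contributes negligibly to the defining equation and to the log-likelihood sum precisely when $a = o(n/M)$, and reduce the Taylor-expanded ratio to $n\,\overline{T}^{\top}S^{-1}\overline{T} + o_p(1) \to_d \chi^2_q$ via Kitamura's blockwise CLT and law of large numbers. The only difference is presentational — you carry the augmented averages $\overline{T}_a,\ \hat S_a$ and show they collapse onto the unaugmented ones, while the paper splits off the $(Q{+}1)$th summand at each stage — but the estimates and the roles of $a = o(n/M)$, $T^* = o_p(\sqrt{n}/M)$, and $QM \gtrsim n$ are identical.
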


The factor $ n/(QM) $ is to account for the overlap between blocks. For non-overlapping blocks, $n/(QM) = 1 $. {\color{black} The theoretical rate of the tuning parameter $o(n/M)$ is still quite large, even though it is slower than the $o(n)$ under the i.i.d setting in \cite*{Chen2008} due to the data blocking used here for weakly dependent data. The choice of the tuning parameter $ a $ in practice is delicate, and it may depend on the statistical task at hand. In the next section, we show that with a properly selected tuning parameter, the ABEL ratio confidence region defined below in \eqref{eq: adj elr confidence region} can achieve the theoretical Bartlett corrected error rate. Similar phenomenon has been observed under the i.i.d setting by \cite{Liu2010}; however, extensive effort and extreme care must be exercised in our weakly dependent setting to deal with the interplay between data blocking and the adjustment. (cf. see details in Section \ref{sec: adj proofs} and Appendix S1)}

By Theorem \ref{thm: adj asymptotic chi}, a $(1-\alpha)\cdot100\%$ asymptotic confidence region based on the ABELR can be constructed as,
\begin{equation} \label{eq: adj elr confidence region}
    CR_{1-\alpha} = \left\{\theta | -2 \frac{n}{MQ} ABELR_Q(\theta) < \chi^2_{q, 1 -\alpha} \right\}.
\end{equation}

By the design of the extra point (Equation \ref{eq: adj block extra point}), it is clear that $ABELR_Q(\theta)$ is well defined for any $\theta$. As a consequence, there is no finite sample upper bound imposed by the convex hull on the coverage probability of the confidence region \eqref{eq: adj elr confidence region}. 

In practice, one may be interested into the following hypothesis
\begin{align} \label{sec meth: testing constraint}
    H_0: (\theta_{i_1}, \dotsc, \theta_{i_r})^\top = (\theta_{i_1, 0}, \dotsc, \theta_{i_r, 0})^\top
\end{align}
for a subset $(\theta_{i_1}, \dotsc, \theta_{i_r})^\top$ of the parameters $\theta = (\theta_1, \dotsc, \theta_p)^\top$, $r \leq p$. For example in regression analysis, $Y = \theta X + \epsilon, \theta^\top \in \mathbb{R}^p, X \in \mathbb{R}^p$, where it is usually of interest to test the significance of a subset of the coefficients to see if the corresponding covariates have effects. The following result shows that the ABEL can accommodate this situation. {\color{black} Let $\Theta_{r,0}$ denote the parameter space under the null hypothesis \eqref{sec meth: testing constraint}.} Define the maximum blockwise empirical likelihood estimator (MBELE) as
\begin{align} \label{sec meth: mbele}
    \hat{\theta} := \underset{\theta \in \Theta_{r,0}}{\argmax} \sup_{p_{i}} \left\{\prod_{i = 1}^{Q} p_{i}: p_{i} \geq 0, \sum_{i = 1}^{Q} p_{i} =1, \sum_{i = 1}^{Q} p_{i} T_{i}(\theta) = 0 \right\}.
\end{align}

\begin{theorem} \label{sec meth: adj asym chi under constraint}
    Suppose that assumptions A.1-A.8 and the strong mixing condition \eqref{eq: adj strong mixing 2} hold. If $ a = o(n/M) $, then with $\hat{\theta}$ as defined in \eqref{sec meth: mbele},
	\begin{equation*}
	    -2 \frac{n}{QM}ABELR_{Q}(\hat{\theta}) \rightarrow_{d} \chi^{2}_{q-p+r}, \text{ as } n \rightarrow \infty.
	\end{equation*}
\end{theorem}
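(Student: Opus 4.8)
The plan is to reduce the statement to a profiled blockwise empirical likelihood and then exploit the projection (Cochran) structure that already underlies Theorem~\ref{thm: adj asymptotic chi}. Reparametrize $\theta = (\psi, \phi)$, where $\psi \in \mathbb{R}^{r}$ collects the constrained coordinates $\theta_{i_1}, \dotsc, \theta_{i_r}$, held fixed at their null values $\psi_{0}$, and $\phi \in \mathbb{R}^{p-r}$ collects the free coordinates; under $H_{0}$ the true value is $\theta_{0} = (\psi_{0}, \phi_{0})$ with $\phi_{0}$ interior. The MBELE is then $\hat{\theta} = (\psi_{0}, \hat{\phi})$. Because $a = o(n/M)$ forces $a = o(Q)$ (since $L \leq M$ gives $Q \geq n/M$), the pseudo point $T_{Q+1}(\theta) = -a\overline{T}(\theta)$ and its derivative $-a\,\partial \overline{T}(\theta)/\partial \phi^{\top}$ perturb the scaled log-ratio and the estimating equations only by $o_{p}(1)$ --- precisely the negligibility already used to prove Theorem~\ref{thm: adj asymptotic chi}. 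Hence the MBELE $\hat{\phi}$ (defined through the unadjusted BEL over the $Q$ genuine blocks) and the maximizer of the adjusted objective coincide to the needed order, and it suffices to track the leading BEL part of $-2\frac{n}{QM}ABELR_{Q}(\hat{\theta})$.

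First I would derive the joint asymptotic expansion of the multiplier $\lambda_{a}(\hat{\theta})$ and of $\hat{\phi}$. The pair solves the stationarity system from \eqref{eq: adj blockwise elr}: the multiplier equation $\sum_{i=1}^{Q+1} T_{i}(\hat{\theta})/\{1+\lambda_{a}^{\top}T_{i}(\hat{\theta})\} = 0$ together with the profiling equation $\sum_{i=1}^{Q+1}\{1+\lambda_{a}^{\top}T_{i}(\hat{\theta})\}^{-1}(\partial T_{i}(\hat{\theta})/\partial\phi^{\top})^{\top}\lambda_{a} = 0$ obtained from $\partial\, ABELR_{Q}/\partial\phi = 0$. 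Expanding this system about $(\lambda,\phi) = (0,\phi_{0})$ --- controlling the first and second derivatives of $g$ through A.5, A.7, A.8 and invoking the blockwise law of large numbers and central limit theorem under the mixing condition \eqref{eq: adj strong mixing 2} --- gives $\|\lambda_{a}(\hat{\theta})\| = O_{p}(M/\sqrt{n})$ and $\hat{\phi} - \phi_{0} = O_{p}(n^{-1/2})$, with the usual sandwich form; the quantity that governs the logarithm expansion, $\max_{i}|\lambda_{a}^{\top}T_{i}|$, is then $O_{p}(\sqrt{M/n})$. Writing $Z := \sqrt{n}\,\overline{T}(\theta_{0})$, so that $Z \to_{d} N(0,S)$ with $S$ the long-run variance of A.6, $G := \partial \overline{T}(\theta_{0})/\partial\phi^{\top} \to_{p} D_{\phi}$ (the $q\times(p-r)$ submatrix of $D$ indexed by the free coordinates), and using the self-normalization $\frac{n}{QM}\sum_{i} T_{i}T_{i}^{\top} \to_{p} S$, substitution back into \eqref{eq: adj blockwise elr} yields
\begin{equation*}
-2\frac{n}{QM}ABELR_{Q}(\hat{\theta}) = Z^{\top}\bigl\{S^{-1} - S^{-1}D_{\phi}(D_{\phi}^{\top}S^{-1}D_{\phi})^{-1}D_{\phi}^{\top}S^{-1}\bigr\}Z + o_{p}(1).
\end{equation*}

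The bracketed matrix equals $S^{-1/2}\Pi S^{-1/2}$, where $\Pi = I_{q} - S^{-1/2}D_{\phi}(D_{\phi}^{\top}S^{-1}D_{\phi})^{-1}D_{\phi}^{\top}S^{-1/2}$ is an orthogonal projection of rank $q-(p-r)$, since $D_{\phi}$ has full column rank $p-r$ by A.8. As $S^{-1/2}Z \to_{d} N(0,I_{q})$, the quadratic form converges to $\chi^{2}_{q-(p-r)} = \chi^{2}_{q-p+r}$. Equivalently, one may argue by chi-square decomposition: Theorem~\ref{thm: adj asymptotic chi} supplies $-2\frac{n}{QM}ABELR_{Q}(\theta_{0}) \to_{d}\chi^{2}_{q}$, the profiling subtracts off $-2\frac{n}{QM}\{ABELR_{Q}(\theta_{0}) - ABELR_{Q}(\hat{\theta})\} \to_{d}\chi^{2}_{p-r}$ living in the column space of $S^{-1/2}D_{\phi}$, and the two pieces are asymptotically independent because they occupy orthogonal subspaces, so Cochran's theorem leaves $\chi^{2}_{q-p+r}$.

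The main obstacle will be carrying the data blocking and the adjustment through every remainder of this now two-variable ($\lambda$ and $\phi$) expansion. In the i.i.d.\ Qin--Lawless calculation the remainders vanish by the classical rate $\|\lambda\| = O_{p}(n^{-1/2})$; here each $T_{i}$ is itself an $M$-term block average, the effective small quantity is the slower $\max_{i}|\lambda_{a}^{\top}T_{i}| = O_{p}(\sqrt{M/n})$, and one must verify --- using the moment bounds of A.7, the growth constraint $M = o(n^{1/2-1/(2+\epsilon)})$, and the mixing summability \eqref{eq: adj strong mixing 2} --- that the cubic term $\frac{n}{QM}\sum_{i}(\lambda_{a}^{\top}T_{i})^{3}$, the Jacobian fluctuation $\frac{n}{QM}\sum_{i}\partial T_{i}/\partial\phi^{\top} - G$, and above all the adjustment contributions $a\overline{T}(\theta)$ and $a\,\partial\overline{T}/\partial\phi^{\top}$ are each $o_{p}(1)$ after scaling exactly when $a = o(n/M)$. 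Showing that the extra point is invisible, at leading order, to both the quadratic form and the profiling direction $D_{\phi}$ --- so that the projection $\Pi$ and its rank are unaffected --- is the delicate step that separates this weakly dependent, blocked setting from the classical i.i.d.\ argument.
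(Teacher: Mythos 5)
Your proposal is correct and follows essentially the same route as the paper's proof: both reduce $-2\frac{n}{QM}ABELR_{Q}(\hat{\theta})$ to the quadratic form $n\,\overline{T}(\hat{\theta})^{\top}S^{-1}\overline{T}(\hat{\theta})+o_{p}(1)$ by showing the pseudo point and all remainders are negligible under $a=o(n/M)$, and then conclude via the idempotent projection $S^{-1/2}\bigl[S-D^{c}\{(D^{c})^{\top}S^{-1}D^{c}\}^{-1}(D^{c})^{\top}\bigr]S^{-1/2}$ of rank $q-p+r$. The only cosmetic difference is that the paper imports the asymptotics of $\hat{\theta}$ and $\sqrt{n}\,\overline{T}(\hat{\theta})$ directly from Theorem 1 of Kitamura (1997) (noting that the MBELE is defined through the unadjusted BEL), whereas you sketch deriving them from the stationarity system.
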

{\color{black} As is well known from standard likelihood theory, the degrees of freedom $p-q+r$ accounts for the fact that $p-r$ parameters have to be estimated.
}

\section{Tuning Parameter for Bartlett Corrected Error Rate} \label{sec: adj bart tune}

Being Bartlett correctable is an important feature of the parametric likelihood ratio confidence region, where in the i.i.d case, the coverage probability error can be decreased from $ O(n^{-1}) $ to $ O(n^{-2}) $. As its parametric counterpart, \citet*{DiCiccio1991} showed that the empirical likelihood for smooth function model is also Bartlett correctable. Further, \cite{Chen2007} showed that this property also holds for the empirical likelihood with general estimating equations. All these treat the i.i.d case. For weakly dependent data, \cite{Kitamura1997} showed that the BEL for smooth function model is Bartlett correctable, where the coverage probability error can be improved from $ O(n^{-2/3}) $ to $ O(n^{-5/6}) $. In this section, we show that through an Edgeworth expansion of the ABEL ratio, a tuning parameter $ a $ can be found such that with general estimating equations the confidence region \eqref{eq: adj elr confidence region} has coverage error $ O(n^{-5/6}) $, and we call such an $a$ the high precision tuning parameter. For ease of notation, we assume the non-overlapping blocking scheme, where $ M = L $. We assume the block size $ M = O(n^{1/3}) $, and $ QM \geq n $. In addition to the mixing condition \eqref{eq: adj strong mixing 2}, we need a stronger assumption that controls the degrees of dependence $ \alpha_{X}(m) \leq c e^{-dm}, \text{ for all } m, $ where $ \alpha_{X}(m) $ and $c$ are defined in \eqref{eq: adj strong mixing 1} and \eqref{eq: adj strong mixing 2} and $d$ is a positive constant. {\color{black} We further assume the validity of the Edgeworth expansion of sums of dependent data. Following \cite{Gotze1983}, this entails the assumptions of (i) the existence of higher moments, (ii) a conditional Cramer condition, and (iii) the random processes are approximated by other exponentially strong mixing processes with exponentially decaying mixing coefficients that satisfy a Markov type condition.} \cite{Kitamura1997} used an Edgeworth expansion for blockwise empirical likelihood. Edgeworth expansion for sums of data blocks were also used to analyze bootstrap methods by \cite{Lahiri1991SecondBootstrap}, \cite{Davison1993}, and \cite{Lahiri1996OnModels}. 

To simplify notation, assume that 
\begin{align} \label{sec meth: identity covariance}
    V :=M \mathbb{E} [T_i(\theta_0) T_i(\theta_0)^{\top}] = I_q,
\end{align}
where $I_q$ is the identity matrix, and we omit the argument $\theta_0$ from quantities related to $T_i$ in the rest of this paper unless it is instructive to emphases the dependence on $\theta_0$. Let \textcolor{black}{the superscript $ j $ denote the jth component of a random vector, for example} $ T_i^j $ denotes the jth component of $ T_i $. For $j_k \in \{1, \dotsc, q\}, k = 1, \dotsc, v $, define 
\begin{equation} \label{eq: adj moment notation}
    \alpha^{j_{1} \cdots j_{v}} := M^{v-1} \mathbb{E} [T_i^{j_{1}} \cdots T_i^{j_{v}}],
\end{equation}
and notice that $ \alpha^{rr} = 1$ and $ \alpha^{rs} = 0 \text{ for } r \neq s $. 

Further, we define the counterpart of \eqref{eq: adj moment notation} for dependent data as the following: for integers $ 0 < k(1) < \dotsc < k(d) = k,\ k \geq 3$, let
{\small
\begin{align} \label{sec tuning: alpha tilde}
&\tilde{\alpha}^{j_1 \cdots j_{k(1)}, j_{k(1) + 1} \cdots j_{k(2)},    \cdots j_{k(d - 1)}, j_{k(d - 1) + 1} \cdots j_{k(d)}} \\
:= & \frac{1}{Q} \sum_{1 \leq i(1), \dotsc, i(d) \leq Q} 
\mathbb{E} \left\{M^{-1}\left(M^{k(1)} T_{i(1)}^{j_{1}} \cdots T_{i(1)}^{j_{k(1)}}\right) 
\times \left( M^{k(2) - k(1)} T_{i(2)}^{j_{k(1) + 1}} \cdots T_{i(2)}^{j_{k(2)}}\right) \right. \nonumber \\
&\hspace*{1.5cm}\left. \times \cdots \times \left(M^{k(d) - k(d-1)} T_{i(d)}^{j_{k(d - 1) - 1}} \cdots T_{i(d)}^{j_{k(d)}}\right) \right\} 
\times I_{\left\{\max_{p, q < d}|i(p) - i(q)| \leq k-2 \right\}}. \nonumber
\end{align}
}
\textcolor{black}{With the above higher moments $ \alpha^{\dotsc} $ and $ \tilde{\alpha}^{\dotsc} $, the high precision tuning parameter $a$ can be expressed} as follows. Let 
\begin{equation} \label{eq: adj bart tuning}
    a : = \frac{1}{2q} \frac{Q}{n} a_{ii},
\end{equation}
where, for $r,\ i  = 1, \dotsc, q$,
\begin{equation} \label{eq: adj ari}
    a_{ri} = \frac{1}{q} \left(t_{1a} [2] + t_{1b} [2] + t_{1c} + t_{2a} [2] + t_{2b} + t_{3a} [2] + t_{3b} [2] + t_{3c}\right),
\end{equation}
with $ t_{ja} [2] = t_{ja} + t_{ja'},\ j = 1,2,3$, and similarly for $t_{jb}[2]$, where the quantities $t_{ja},\ t_{jb},\ t_{jc}, \ t_{ja'}, \ t_{jb'}$,\ $j = 1,2,3$ are defined as follows. First we give $t_{ja},\ t_{jb},\ t_{jc}$,\ $j = 1,2,3$:
\begin{align*}
&t_{1a} = \alpha^{rkl} \tilde{\alpha}^{i, k, l},\\
&t_{1b} = \frac{3}{8} \tilde{\alpha}^{rk, l} \tilde{\alpha}^{lk, i} - \frac{5}{6} \alpha^{rkl}\tilde{\alpha}^{ik, l} - \frac{5}{6} \alpha^{rkl} \tilde{\alpha}^{kl, i} + \frac{8}{9} \alpha^{rkl} \alpha^{ikl}, \\
&t_{1c} = \frac{1}{4} \alpha^{rkl} \tilde{\alpha}^{il, k} - \frac{2}{3} \alpha^{rkl} \tilde{\alpha}^{ik, l} + \frac{2}{9} \alpha^{rkl} \alpha^{ikl}, \\
&t_{2a} = \frac{3}{8} \tilde{\alpha}^{rl, l} \tilde{\alpha}^{ik, k} - \frac{5}{12} \alpha^{irk} \tilde{\alpha}^{kl, l}  + \frac{4}{9} \alpha^{ril} \alpha^{lkk}  - \frac{5}{12} \alpha^{kll} \tilde{\alpha}^{ik, r},\\
&t_{2b} = \frac{1}{4} \tilde{\alpha}^{rk, k} \tilde{\alpha}^{il,l} -\frac{1}{3} \alpha^{rkk}  \tilde{\alpha}^{il, l}  + \frac{1}{9} \alpha^{rkk} \alpha^{ill}, \\
& t_{3a} = -\frac{1}{2} \tilde{\alpha}^{rk, ki},\\
&t_{3b} = \frac{3}{8} \tilde{\alpha}^{rk, ik} + \tilde{\alpha}^{irl, l} - \frac{3}{4} \alpha^{rikk},\\
&t_{3c} = \frac{1}{4} \tilde{\alpha}^{rk, ik}.
\end{align*}

The $t_{ja'} \text{ and } t_{jb'},\ j = 1,2,3$ are the same as $t_{ja} \text{ and } t_{jb},\ j = 1,2,3$ respectively, except that the superscripts $r \text{ and } i$ are exchanged, for example $t_{1a'} = \alpha^{ikl} \tilde{\alpha}^{r, k, l}$.

The following theorem shows that, when using this tuning parameter $ a $ given by Equation (\ref{eq: adj bart tuning}), the ABEL ratio confidence region \eqref{eq: adj elr confidence region} achieves the Bartlett corrected coverage error rate.

\begin{theorem} \label{thm: high order adj coverage}
	Suppose that conditions A.1-A.8 in Section \ref{sec: adj bel} hold with non-overlapping blocking scheme, and that $\alpha_X(m) \leq c e^{-dm}$ for some $c, d> 0$. We further assume that the assumptions for the Edgeworth expansion for sums of dependent data mentioned in the beginning of this section hold. If $ a $ is defined as in \eqref{eq: adj bart tuning}, then as $ n \rightarrow \infty$, 
	\[
	P\left(- 2 \frac{n}{MQ} \sum_{i = 1}^{Q+1} \log \{1 + \lambda_{a}^{\top} T_i(\theta_0) \} \leq x \right) = P(\chi^{2}_{q} \leq x) + O(n^{-5/6}).
	\]
\end{theorem}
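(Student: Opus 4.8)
The plan is to reduce the claim to an Edgeworth expansion of the studentized statistic $W := -2\frac{n}{MQ}\sum_{i=1}^{Q+1}\log\{1+\lambda_a^\top T_i(\theta_0)\}$ and to exhibit the tuning level \eqref{eq: adj bart tuning} as exactly the value that annihilates the leading $O(n^{-2/3})$ term of that expansion, leaving the block-approximation error $O(n^{-5/6})$ familiar from \cite{Kitamura1997}. I would organize the argument around three facts: (i) the blockwise ratio already enjoys the Bartlett structure, so its $O(n^{-2/3})$ distributional term is carried entirely by the $O(n^{-2/3})$ bias in the mean of $W$; (ii) adding the pseudo-point $T_{Q+1}=-a\overline T$ perturbs $W$, to leading order, by the Bartlett-type rescaling $W\approx(1-2a/Q)\,W_{\mathrm{BEL}}$; and (iii) the array $a_{ri}$ of \eqref{eq: adj ari} is precisely the moment combination appearing in that mean bias, so that a single scalar matching removes the whole leading term.

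First I would fix the orders that drive the truncation. Under the non-overlapping scheme with $M\asymp n^{1/3}$, $Q\asymp n^{2/3}$ and the normalization $V=I_q$ of \eqref{sec meth: identity covariance}, one has $\overline T=O_p(n^{-1/2})$, $\lambda_a=O_p(n^{-1/6})$ and $\lambda_a^\top T_i=O_p(n^{-1/3})$, so in $-2\log(1+u)=-2u+u^2-\frac{2}{3}u^3+\dots$ the terms past the quartic sum to $o(n^{-5/6})$ and may be dropped. Solving the multiplier equation $0=\sum_{i=1}^{Q+1}T_i/(1+\lambda_a^\top T_i)$ order by order gives $\lambda_a=\lambda_0+\delta$, where $\lambda_0$ is the multiplier of the unadjusted $Q$-block system and $\delta=O_p(a/Q)\,\lambda_0$ enters through $\sum_{i=1}^{Q+1}T_i=(Q-a)\overline T$ and $\sum_{i=1}^{Q+1}T_iT_i^\top=\sum_{i=1}^{Q}T_iT_i^\top+O_p(a^2/n)$. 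Substituting back separates $W=W_{\mathrm{BEL}}+W_{\mathrm{adj}}$; the quadratic part produces the rescaling $(1-2a/Q)W_{\mathrm{BEL}}$, while the cubic and quartic parts, into which both the extra point and the shift $\delta$ feed, must be kept because they carry the remaining $O(n^{-2/3})$ content.

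Next I would compute the cumulants of $W$ and invoke the Edgeworth expansion for sums of dependent blocks assumed via \cite{Gotze1983}. As in the i.i.d.\ EL theory and in \cite{Kitamura1997}, the half-order $O(n^{-1/3})$ term cancels by the self-studentizing structure of the ratio, so the leading correction to $P(\chi^2_q\le x)$ is of order $n^{-2/3}$ and, by the Bartlett property, is proportional to the mean discrepancy $\mathbb{E}[W_{\mathrm{BEL}}]-q$. I would then show that this discrepancy equals $a_{ii}/n$ up to smaller order, where $a_{ii}$ is the trace of \eqref{eq: adj ari} assembled from the within-block array $\alpha^{\cdots}$ of \eqref{eq: adj moment notation} and the cross-block array $\tilde\alpha^{\cdots}$ of \eqref{sec tuning: alpha tilde}; the $\tilde\alpha$ pieces encode the dependence between neighboring blocks and have no analogue in the i.i.d.\ work of \cite{Chen2008} and \cite{Liu2010}. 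Because the rescaling contributes $-2aq/Q$ to the mean, the choice \eqref{eq: adj bart tuning}, namely $a=\frac{1}{2q}\frac{Q}{n}a_{ii}$, makes $a_{ii}/n-2aq/Q=0$, so the entire $O(n^{-2/3})$ term vanishes; the surviving contributions are the HAC-type block-variance bias and the next Edgeworth order, both $O(n^{-5/6})$, which yields the stated accuracy.

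The main obstacle is the order-consistent evaluation of fact (iii): identifying the mean bias with $a_{ii}/n$ and verifying that the intricate combination \eqref{eq: adj ari} is exactly the cancelling array. Unlike the i.i.d.\ case, the cross-block terms must be retained, and they enter at genuinely different magnitudes depending on how many indices are contracted and how many distinct block indices appear, for instance $\tilde\alpha^{rk,ki}=O(M)$ while $\tilde\alpha^{i,k,l}=O(1)$, so the individual pieces $t_{1a},\dots,t_{3c}$ in \eqref{eq: adj ari} sit at different orders and only their prescribed combination has the correct size. Keeping the interplay of $M$, $Q$ and $a$ tight enough that every discarded remainder is truly $O(n^{-5/6})$ or smaller, and controlling $\lambda_a$ uniformly so the truncated logarithmic expansion is legitimate under only the moment and exponential-mixing hypotheses, is the meticulous bookkeeping the authors defer to Section \ref{sec: adj proofs} and Appendix S1.
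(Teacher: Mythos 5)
Your outline reproduces the skeleton of the paper's argument correctly: the multiplier relation $\lambda_a=(1-a/Q)\lambda+O_p(Q^{-3/2})$, the resulting identity that the adjusted statistic equals the unadjusted blockwise statistic minus $2\frac{an}{M^2Q}\lambda^\top\lambda$ up to $O_p(n^{-5/6})$, and the final cancellation $a_{ii}/n-2aq/Q=0$ under \eqref{eq: adj bart tuning} are all exactly what happens in Section \ref{sec: adj proofs}. The genuine gap is your ``fact (i)''. You assume that $W_{\mathrm{BEL}}$ ``already enjoys the Bartlett structure,'' i.e.\ that its $O(n^{-2/3})$ distributional error is proportional to its mean discrepancy, and then reduce the whole problem to matching means. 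That Bartlett identity is precisely what has to be proved here: it is available for blockwise EL only under the smooth function model \citep{Kitamura1997} and for general estimating equations only in the i.i.d.\ case \citep{Chen2007}; for blockwise EL with general estimating equations under mixing it is a new claim, and establishing it is the bulk of the paper's proof. Concretely, the paper derives the signed-root decomposition $\frac{n}{MQ}ABELR_{Q}(\theta_{0})=2nR^\top R+O_p(n^{-5/6})$ with $R=R_1+R_2+R_3-\frac{a}{Q}R_1$, uses the asymptotic $1$-dependence of the blocks implied by $\alpha_X(m)\le ce^{-dm}$ to show that the third and fourth joint cumulants of $\sqrt{n}R$ are $O(n^{-5/6})$, and then applies a \emph{multivariate} Edgeworth expansion to the density of $\sqrt{n}R$, integrating over the ball $\{z:z^\top z\le x\}$ so that the odd $n^{-1/2}W_1$ term vanishes by parity and only the trace part of the $n^{-1}W_2$ term survives. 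A one-dimensional Edgeworth expansion of the scalar $W$, as you propose, does not expose either the parity cancellation or the cumulant structure; without the higher-cumulant bounds, annihilating the mean bias does not control the distribution function, so the step ``by the Bartlett property the leading correction equals $\mathbb{E}[W_{\mathrm{BEL}}]-q$'' is unjustified as stated.

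A secondary remark: you identify the evaluation of the mean bias as $a_{ii}/n$ as the main obstacle, and that is fair --- it corresponds to the second-cumulant computation $\kappa^{ri}=\delta_{ri}+[a_{ri}-D-2a\delta_{ri}n/Q]n^{-1}+O(n^{-5/6})$, which is where the within-block arrays $\alpha^{\cdots}$ and the cross-block arrays $\tilde\alpha^{\cdots}$ of \eqref{sec tuning: alpha tilde} are assembled into \eqref{eq: adj ari}. But the negligibility of $\kappa^{uvw}$ and $\kappa^{ruvw}$ at the $O(n^{-5/6})$ level is equally essential to the conclusion and is entirely absent from your plan; it is not a bookkeeping afterthought but the load-bearing step that licenses the mean-matching argument you want to run.
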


{\color{black}
The high precision tuning parameter $a$ is composed of various orders of population moments, which can be replaced by their corresponding sample moments to obtain a plug-in estimator $\hat{a}_{ii}$ of $ a_{ii} $. However, the plug-in estimator for the independent counterpart of $a$ given by \cite{Liu2010} is observed to be severely biased. We expect that the plug-in estimator for $a$ will be biased because with weakly dependent data, the high precision tuning parameter is more complex and involves even more higher moments than its independent counterpart. Under the independent setting, \cite{Liu2010} discussed the bootstrapping idea from Bartlett correction given by \cite{Chen2007} since the high precision tuning parameter is closely related to the Bartlett correction factor. However, to the best of our knowledge, there is no existing discussion on practical estimation procedure for the Bartlett correction factor for the blockwise empirical likelihood with weakly dependent data. In particular, the bootstrap procedure in \cite{Chen2007} will not work for our high precision tuning parameter because to implement their bootstrap procedure with our ABEL, it is required to compute the empirical likelihood ratio that requires the high precision tuning parameter, which is the target that we want to estimate. In other words, this procedure requires the exact object that we want to estimate in the intermediate steps. One possible solution is to use an iterative algorithm and hope it will convergence. The difficulties facing such an approach are that 1) there is no known such algorithm, and 2) an iterative algorithm is likely to increase the already heavy computational burden associated with bootstrap. Since the main drawback of the plug-in estimator is its large bias, we propose to correct this bias via a blockwise bootstrap procedure \citep[cf. see Chapter 10 in][for details on bootstrap bias correction procedure]{Efron1993AnBootstrap}. The blockwise bootstrap is to deal with dependent data and a comprehensive account of this method is given in \cite{Lahiri2003ResamplingData}. Various blocking schemes such as moving block bootstrap and nonoverlappping block bootstrap (NBB) can be used depending on the application. In the simulation and application given in Sections \ref{sec: adj simulation} and \ref{sec: applicatioin} respectively, we use the NBB for demonstration purposes. As pointed out by \cite{Efron1993AnBootstrap}, the bootstrap bias correction may introduce undue standard errors. As a remedy, we compare the estimated bias and standard error before carrying out bias correction. If the estimated bias is smaller than the estimated standard error, we use the original plug-in estimator, if the estimated bias is larger than the estimated standard error, we proceed to correct the bias on the plug-in estimator. In addition, one can also estimate the standard error of the bias corrected plug-in estimator and compare it to the standard error of the original plug-in estimator to check if the bias correction procedure introduces excessive standard error.
}

In addition, the estimated high precision tuning parameter may not always be positive. If it is, then the convex hull constructed with the extra point will always contain the origin. If, on the other hand, $ \hat{a}_{ii} $ is negative, then $ \hat{a} = \frac{1}{2} \frac{Q}{n} \frac{1}{q} \hat{a}_{ii} $ is also negative. As a result, the convex hull with the new point added will not contain the origin if the original convex hull does not. To avoid the second situation, if $ \hat{a}_{ii} < 0 $, we adopt the method proposed by \cite{Liu2010} to add two extra points $ T_{n+1} = -a_{1} \overline{T} \text{ and } T_{n+2} = -a_{2} \overline{T}$, such that $ a = a_{1} + a_{2} $. We can let $ a_{1} = 2a = \frac{Q}{n} \frac{1}{q} a_{ii} < 0 $ and $ a_{2} = -a = -\frac{1}{2} \frac{Q}{n} \frac{1}{q} a_{ii}> 0 $, such that $ T_{n+2} $ will guarantee that the origin is in the new convex hull. Moreover, since $ a = a_{1} + a_{2} $, adding $ T_{n+1} \text{ and } T_{n+2} $ will have the same effect as adding $ T_{n+1} $ with tuning parameter $a$ in terms of obtaining the Bartlett coverage probability. 

\section{Simulation} \label{sec: adj simulation}

In this section, we examine the numerical properties of the proposed ABEL approach through a simulation study. We compare the corresponding confidence regions based on ABEL under various tuning parameters to the one based on BEL. The data $x_i,\ i = 1, \dotsc, n$ are simulated from the following AR(1) model 
\begin{equation*}
    x_{i+1} = \text{diag}(\rho) x_i + \epsilon_{i+1},~~i = 1, \dotsc, n,
\end{equation*}
where $\epsilon_i$ are i.i.d $d-$dimensional multivariate standard normal random variables and $\diag(\rho)$ is a diagonal matrix with $\rho$ on the diagonal. The parameter of interest is the population mean $\mu := E(x_i)$. In order to see how the data dependencies affect the performance of the methods, we simulate the data with a broad range of $\rho's$. In particular, we look at $ \rho = -0.8, -0.5, -0.2, 0.2, 0.5, 0.8$. We also vary the dimension $d$ and consider $d = 2, 3, 4, 5, 10.$  Two sample sizes $n = 100$ and $400$ are considered. For each scenario, we calculate the blockwise empirical likelihood ratio at block lengths ranging from $2$ to $16$ in order to examine the effects of block choices. In addition, we also use the progressive blocking method proposed by \cite*{Kim2013}, {\color{black} which is a blocking scheme that does not require the selection of block length}. For each scenario, $1000$ data sets are simulated and the likelihood ratio for each data set at the true mean is calculated. The coverage probability is then calculated as the proportion of times the likelihood ratio is less than the theoretical $\chi^2$ quantile at levels $\alpha = 0.1, 0.05, 0.01$. The likelihood ratios are calculated by the blockwise empirical likelihood without adjustment (BEL), adjusted blockwise empirical likelihood with $a = \log(n)/2$ (ABEL$_{\text{log}}$), $a = 0.5$ (ABEL\_0.5), $a = 0.8$ (ABEL$_{0.8}$), $a=1$(ABEL$_{1}$), and the high precision tuning parameter $a$ given by \eqref{eq: adj bart tuning} (ABEL$_{\text{hp}}$). The high precision tuning parameter \eqref{eq: adj bart tuning} is estimated by the plug-in estimator, which is then bias corrected by the blockwise bootstrap as described in Section \ref{sec: adj bart tune}. The full simulation results are shown in Table A1 in Appendix S2. Table \ref{table: adj coverage probability short} presented in this section shows a snapshot of Table A1 with AR(1) coefficients $\rho = -0.2, 0.2, 0.5, 0.8$. The block lengths $M$ shown in the tables are the ones that give the best coverage rates of each particular method, where $M=\text{pro}$ indicates that the progressive block method gives the best result. It can be seen that for negative $\rho$, BEL performed well and at least one of the considered ABEL methods matched or surpassed the BEL performance. As $\rho$ becomes positive, the BEL starts to show its vulnerability of under-coverage, and this becomes worse as dimension increases. In contrast, ABEL still provides adequate coverage. The phenomenon of BEL's under-coverage being not as severe for negative $\rho$ as for positive ones exemplifies the fact that the coverage probability is upper bounded by the probability of the convex hull containing the origin. For when $\rho$ is negative, the consecutive points are more likely to be on opposite sides in relation to the origin, and therefore the resulting convex hull is more likely to contain the origin and does not impose an upper bound on the coverage probability. Whereas, for positive $\rho$, the reverse effect happens. 

\begin{longtable}{lll||lrrr|lrrr}
   \caption{\small \em Comparison of Coverage Probabilities, $M = $  block length, $M=pro$ means progressive blocking method is used.} \\
\hline
\hline
&&&& n= & $100$ &&& n= & $400$ &\\
\hline
$\rho$ & $d$ & Methods & $M$ & 0.90 & 0.95 & 0.99 & $M$ & 0.90 & 0.95 & 0.99 \\ 
   \hline
-0.2 & 3 & BEL & 3 & 0.90 & 0.94 & 0.98 & 6 & 0.89 & 0.94 & 0.99 \\ 
  -0.2 & 3 & ABEL$_{\text{log}}$ & 3 & 0.94 & 0.97 & 0.99 & pro & 0.90 & 0.96 & 1.00 \\ 
  -0.2 & 3 & ABEL$_{0.8}$ & 3 & 0.91 & 0.95 & 0.98 & 7 & 0.90 & 0.94 & 0.99 \\ 
  -0.2 & 3 & ABEL$_{1}$ & 14 & 0.90 & 0.95 & 0.99 & 7 & 0.90 & 0.95 & 0.99 \\ 
  -0.2 & 3 & ABEL$_{\text{hp}}$ & 14 & 0.91 & 0.94 & 0.97 & pro & 0.90 & 0.95 & 0.99 \\ 
   \hline
   0.2 & 3 & BEL & 3 & 0.82 & 0.89 & 0.95 & 9 & 0.88 & 0.93 & 0.98 \\ 
  0.2 & 3 & ABEL$_{\text{log}}$ & 4 & 0.89 & 0.95 & 1.00 & 6 & 0.90 & 0.95 & 0.99 \\ 
  0.2 & 3 & ABEL$_{0.8}$ & 3 & 0.83 & 0.90 & 0.96 & 9 & 0.88 & 0.94 & 0.98 \\ 
  0.2 & 3 & ABEL$_{1}$ & 14 & 0.88 & 0.95 & 0.99 & 9 & 0.88 & 0.94 & 0.99 \\ 
  0.2 & 3 & ABEL$_{\text{hp}}$ & 12 & 0.93 & 0.96 & 0.98 & 8 & 0.90 & 0.95 & 1.00 \\ 
  \hline
  0.5 & 3 & BEL & 5 & 0.68 & 0.77 & 0.89 & 10 & 0.82 & 0.87 & 0.95 \\ 
  0.5 & 3 & ABEL$_{\text{log}}$ & 5 & 0.89 & 0.97 & 1.00 & 13 & 0.91 & 0.96 & 1.00 \\ 
  0.5 & 3 & ABEL$_{0.8}$ & 16 & 0.74 & 0.88 & 0.97 & 10 & 0.83 & 0.89 & 0.96 \\ 
  0.5 & 3 & ABEL$_{1}$ & 14 & 0.87 & 0.95 & 0.99 & 10 & 0.83 & 0.89 & 0.96 \\ 
  0.5 & 3 & ABEL$_{\text{hp}}$ & 14 & 0.92 & 0.95 & 0.97 & pro & 0.90 & 0.96 & 0.99 \\ 
     \hline
0.5 & 4 & BEL & 4 & 0.64 & 0.72 & 0.85 & 9 & 0.77 & 0.85 & 0.95 \\ 
  0.5 & 4 & ABEL$_{\text{log}}$ & 16 & 0.92 & 0.94 & 0.97 & 11 & 0.88 & 0.95 & 1.00 \\ 
  0.5 & 4 & ABEL$_{0.8}$ & 14 & 0.72 & 0.86 & 0.95 & 9 & 0.79 & 0.87 & 0.96 \\ 
  0.5 & 4 & ABEL$_{1}$ & 14 & 0.86 & 0.92 & 0.97 & 9 & 0.79 & 0.87 & 0.96 \\ 
  0.5 & 4 & ABEL$_{\text{hp}}$ & 13 & 0.91 & 0.94 & 0.96 & pro & 0.91 & 0.97 & 0.99 \\ 
  \hline
0.8 & 2 & BEL & 9 & 0.58 & 0.67 & 0.76 & 16 & 0.77 & 0.85 & 0.94 \\ 
  0.8 & 2 & ABEL$_{\text{log}}$ & 7 & 0.87 & 0.98 & 1.00 & 16 & 0.88 & 0.95 & 1.00 \\ 
  0.8 & 2 & ABEL$_{0.8}$ & 16 & 0.72 & 0.86 & 0.98 & 16 & 0.80 & 0.86 & 0.94 \\ 
  0.8 & 2 & ABEL$_{1}$ & 16 & 0.85 & 0.95 & 0.99 & 16 & 0.80 & 0.87 & 0.95 \\ 
  0.8 & 2 & ABEL$_{\text{hp}}$ & 4 & 0.91 & 0.94 & 0.97 & 13 & 0.90 & 0.96 & 1.00 \\ 
   \hline
   \hline
   \label{table: adj coverage probability short}
\end{longtable}

{\color{black}
\section{Application} \label{sec: applicatioin}
It is well known that the gross domestic product (GDP) is an important indicator that measures the health of the economy. GDP is one of the most important and widely reported economic data that are used by people from business owners to policymakers in their decision making \citep{Wolla2018HowGDP}. In this section, we demonstrate the ABEL method under a regression setting with GDP as the response variable and various economic indicators as covariates. The goal is to demonstrate how to apply ABEL in practice rather than to make new scientific discoveries. 
\begin{figure}
	\begin{center}
		\begin{tabular}{c}
			\includegraphics[width=16cm]{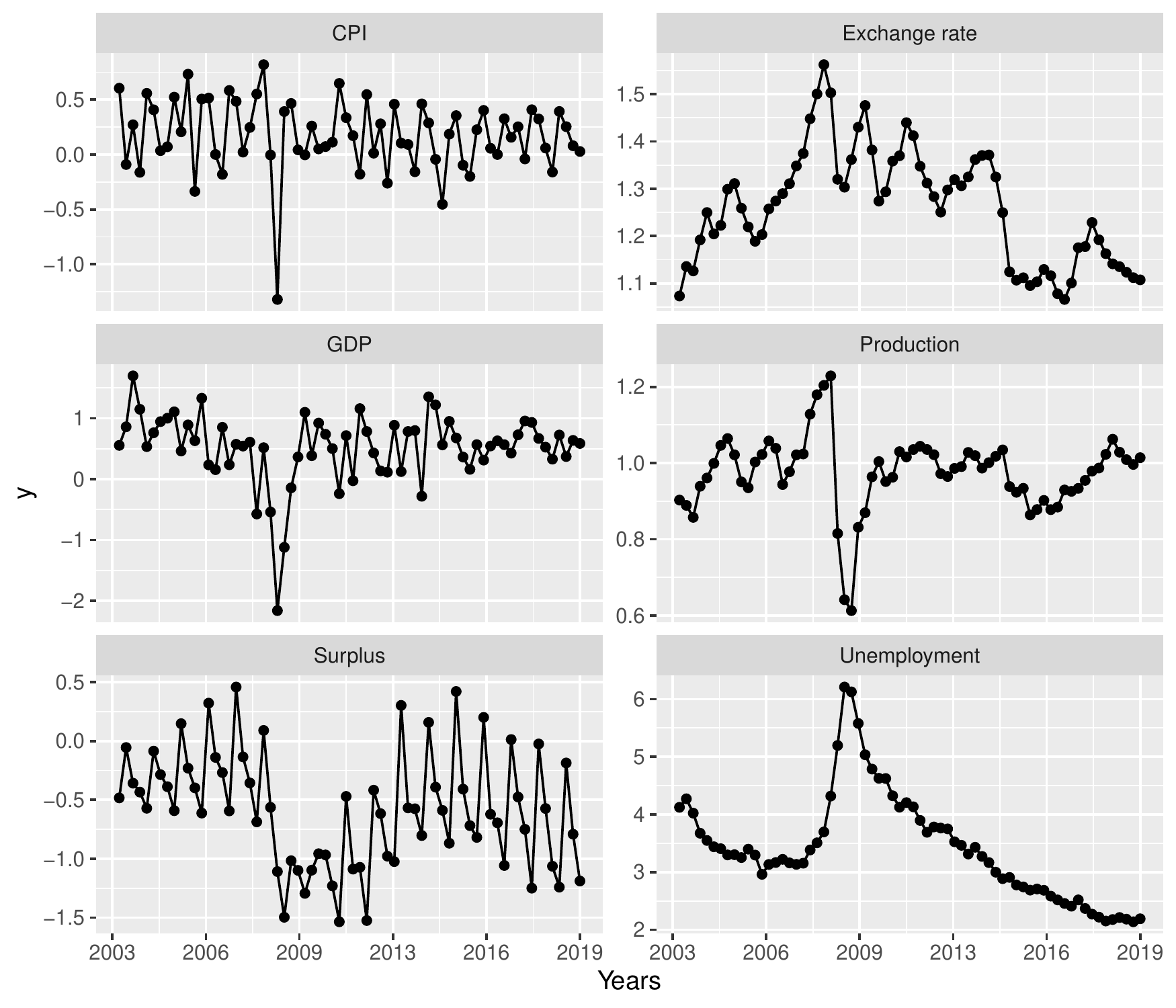}
		\end{tabular}
	\end{center}
	\caption{Plots of economic indicators: consumer price index (CPI), foreign exchange rate (Exchange rate), real gross domestic product (GDP), industrial production (Production), surplus or deficit (Surplus), unemployment rate (Unemployment).} 
	\label{sec app: gdp data}
	\vspace{-0.1in}
\end{figure}

The data were observed quarterly from 2003 to 2019 and were downloaded from St. Louis Federal Reserve website \url{https://fred.stlouisfed.org/}. As shown in Figure \ref{sec app: gdp data}, the data included consumer price index, units of growth rate previous period (CPI); U.S. / Euro foreign exchange rate, units of U.S. dollars to one Euro (Exchange rate); real gross domestic product, units of percent change (GDP); industrial production: manufacturing: durable goods: raw steel, units of Index $2017 = 1$ (Production); federal surplus or deficit, units of $100,100$ millions of dollars (Surplus); unemployment initial claims, units of $100,100$ number (Unemployment).
To illustrate inference with ABEL, we consider a linear model $Y_t =\beta_0 + \beta X_t + \epsilon_t$, where $ X_t \in \mathbb{R}^5$, $\beta_0 \in \mathbb{R}$, and $\beta^\top \in \mathbb{R}^5$ for predicting GDP with the $5$ predictors: CPI, Exchange rate, Production, Surplus, and Unemployment. For this linear model, we can use the estimating function $g(X_t, Y_t, \beta) = X_t (Y_t-\beta X_t)$ \citep{Owen2007EmpiricalModels, Kim2013}. Although the data plotted in Figure \ref{sec app: gdp data} appear to be nonstationary, the ABEL for estimating equations only require the estimating function $g(X_t, Y_t, \beta)$ to be stationary. To test the stationarity of $g(X_t, Y_t, \beta)$, we plug-in the ordinary least square estimate $\hat{\beta}$ for $\beta$ as suggested by \cite{Kim2013}. We then conducted Philips-Perrson test \citep{Phillips1988TestingRegression} by using the aTSA package \citep[cf. page 21 in][]{Qiu2015ATSA:Analysis} in R computing language \citep{RCoreTeam2020}, and the results indicated that $g(X_t, Y_t, \hat{\beta})$ were stationary. 

\begin{table}[h!]
\centering
\caption{Estimated coefficients by maximum blockwise empirical likelihood ($\hat{\beta}$),  $-2n/(QM)ABELR_Q(\beta_i)$ (ABEL), and $-2n/(QM)BELR(\beta_i)$ (BEL) under each null hypothesis $H_{i,0}, i = 1, \dotsc, 5$. BEL blows up to infinity because of the convex hull constraint. However, ABEL are still interpretable.} \label{sec app: regression table}
\begin{tabular}{ c|ccccc } 
\hline
\hline
Methods & CPI & Exchange rate & Production & Surplus & Unemployment \\
\hline
$\hat{\beta}$ & 0.38 & -1.09 & 0.93 & 0.10 & -0.08\\
ABEL & 0.60 & 12.22 & 15.86 & 0.77 & 0.39 \\ 
BEL & $270.70$ & $277.09$ & $277.18$& $275.41$ & $275.47$ \\ 
\hline
\hline
\end{tabular}
\end{table}

To see which economic indicators above are associated with GDP, we use Bonferroni correction to test the following hypothesis simultaneously, $H_{i,0}: \beta_i = 0$ v.s. $H_{i,1}: \beta_i \neq 0, i = 1, \dotsc, 5$. Under each null $H_{i,0}$, the ABEL ratio confidence region is constructed according to Theorem \ref{sec meth: adj asym chi under constraint}, where $r=1$ and $p=q=5$. For a familywise error rate of $0.05$, we have a confidence interval $\{\beta_i: -2n/(QM) ABELR_Q(\beta_i) \leq \chi^2_{1, 0.01} = 6.635 \}$ for each $\beta_i$. For comparison, we calculated BEL ratio confidence intervals according to Theorem 2 in \cite{Kitamura1997}, $\{\beta_i: -2n/(QM) BELR(\beta_i) \leq \chi^2_{1, 0.01} = 6.635 \}$. For ABEL, the high precision tuning parameter was estimated via the bias corrected plug-in estimator by the blockwise bootstrap as described in Section \ref{sec: adj bart tune}. For both ABEL and BEL, we used the progressive blocking scheme to achieve automatic block length selection \citep{Kim2013}. Table \ref{sec app: regression table} displays the calculated $-2n/(QM)ABELR_Q(\beta_i)$ (ABEL) and $-2n/(QM)BELR(\beta_i)$ (BEL) under each null hypothesis $H_{i,0}, i = 1, \dotsc, 5$. With ABEL, it can be seen that at a familywise error rate of $0.05$, it is failed to reject that the coefficients for CPI, surplus, and unemployment are different from 0; and coefficients for production and exchange rate are significantly different from 0. By contrast, due to the convex hull constraint exacerbated by the relatively small sample size, the empirical likelihood ratios by BEL were all very large indicating the nonconvergence of the algorithm solving for the likelihood ratios and thus rendered the results uninterpretable. 
}
\section{Conclusion}
We have shown that the newly proposed adjusted blockwise empirical likelihood is effective in improving the finite sample coverage probability for weakly dependent data. In particular, the corresponding ABEL ratio is shown to possess the asymptotic $\chi^2$ property similar to its non-adjusted counterpart. Moreover, we have shown that the adjustment tuning parameter can be used to achieve the asymptotic Bartlett corrected coverage error rate of $O(n^{-5/6})$ (i.e., the high precision tuning parameter). The formula for the high precision tuning parameter involves higher moments that need to be estimated in practice. We proposed to use the bias corrected plug-in estimator by blockwise bootstrap as a working solution, and it has been shown to be working well in our simulation and data application. However, we defer the systematic investigation of the high precision tuning parameter estimation to future studies. The simulation study showed that ABEL performed comparably to the non-adjusted BEL when the non-adjusted BEL performed well, and it outperformed the non-adjusted BEl by a large margin when the non-adjusted BEL suffered from the under-coverage issue. The effectiveness of ABEL was further demonstrated via an application to an economic data set. The moderate sample size and relative high parameter dimension rendered the test based on BEL uninterpretable; in contrast, the ABEL was still able to provide interpretable test results.

\section{Proofs} \label{sec: adj proofs}

\begin{proof}[Proof of Theorem \ref{thm: adj asymptotic chi}]
For ease of notation, we use $ T_i $ to denote $ T_i(\theta_0) $, where $ \theta_{0} $ is the true parameter. The first step in proving Theorem \ref{thm: adj asymptotic chi} is to show that the Lagrange multiplier $\lambda_a$ is $O_{p}(n^{-1/2}M)$, where we use the subscript $a$ to emphasis that this is the Lagrange multiplier in the adjusted blockwise empirical likelihood. First, we note that $\lambda_a$ solves the following equation 
\begin{equation} \label{eq: adj proof A1}
    \sum_{i=1}^{Q+1}\frac{T_{i}}{1+\lambda^{\top}T_{i}} = 0.
\end{equation}

Now, define $ \tilde{\lambda}_a := \lambda_a/\rho, \text{ where } \rho := \norm{\lambda_a} $. Multiply $ \tilde{\lambda}_a/Q $ on both sides of Equation \eqref{eq: adj proof A1}, and recall that $T_{Q+1} = -a \overline{T}$ (Equation \ref{eq: adj block extra point}), here $ \overline{T} $ means $ \overline{T}(\theta_{0}) $. Then we have
\begin{align}
	0 & = \frac{\tilde{\lambda}_a^{\top}}{Q} \sum_{i=1}^{Q+1} \frac{T_{i}}{1+\lambda_a^{\top} T_{i}} \nonumber \\
	& = \frac{\tilde{\lambda}_a^{\top}}{Q} \sum_{i=1}^{Q+1} T_{i} - \frac{\rho}{Q} \sum_{i=1}^{Q+1} \frac{(\tilde{\lambda}_a^{\top} T_{i})^{2}}{1+\rho \tilde{\lambda}_a^{\top} T_{i}} \nonumber \\
	& \leq \tilde{\lambda}_a^{\top} \overline{T} (1-\frac{a}{Q}) - \frac{\rho}{1+\rho T^{*}} \frac{1}{Q} \sum_{i=1}^{Q} (\tilde{\lambda}_a^{\top}T_{i})^{2} \nonumber \\
	& = \tilde{\lambda}_a^{\top} \overline{T} - \frac{\rho}{1+ \rho T^{*}} \tilde{\lambda}_a^{\top} \frac{1}{Q} \sum_{i=1}^{Q} T_{i} T_{i}^{\top} \tilde{\lambda}_a + O_{p}(n^{-1/2}Q^{-1}a). \label{eq: adj proof A2}
\end{align}
where $ T^{*} := \max_{1 \leq i \leq Q} \norm{T_{i}} $. By the law of large numbers, the central limit theorem for strong mixing processes (cf. \citeauthor{IBRAGIMOV1971}, \citeyear{IBRAGIMOV1971}), Lemma 3.2 in \cite{Kunsch1989} and assumption A.7 given in Section \ref{sec: adj bel}, \cite{Kitamura1997} has shown that $ T^{*} = o(n^{1/2}M^{-1}) $ a.s. Further, central limit theorem for strong mixing processes and assumption A.6 imply that $\sqrt{n}\overline{T} \rightarrow_d N(0, S)$. Then, we can deduce from \eqref{eq: adj proof A2} that 
\begin{equation*}
	0 \leq \tilde{\lambda}_a^{\top}\overline{T} - \frac{\rho}{M(1+\rho T^{*})} (1-\epsilon) \sigma_{1}^{2}(1 + o_P(1)) + O_{p}(n^{-1/2}Q^{-1}a),
\end{equation*}
where $0 < \epsilon <1$ and $\sigma_{1} > 0$ is the smallest eigenvalue of $S$. Then 
\begin{align*}
	&\frac{\rho}{M(1+\rho T_{*})} = O_{p} (n^{-1/2}Q^{-1} a) + O_{p}(n^{-1/2})\\
	\implies& \frac{\rho}{1+\rho T^{*}} = O_{p} (n^{-1/2}Q^{-1}M a) + O_{p}(n^{-1/2}M).
\end{align*}
	
By the assumption $ a = o_p(n/M) $ and $ QM \geq n $, we have
\begin{equation*}
    \frac{\rho}{1+\rho T^{*}} = O_{p} (n^{-1/2}M) \implies \rho = O_{p}(n^{-1/2} M).
\end{equation*}

Therefore, $ \lambda_a = O_{p}(n^{-1/2}M) $, which in particular means that $\lambda^{\top}_a T^* = o_p(1)$. The next step is to  express $ \lambda_a $ in terms of $ \overline{T} $. Notice that Equation \eqref{eq: adj proof A1} can be written as the sum of two parts
\begin{equation} \label{eq: fabel lambda rate 1}
	 0 = \frac{1}{Q} \sum_{i=1}^{Q+1} \frac{T_{i}}{1+\lambda_a^{\top}T_{i}} = \frac{1}{Q} \sum_{i=1}^{Q} \frac{T_{i}}{1+\lambda_a^{\top} T_{i}} + \frac{1}{Q} \frac{-a\overline{T}}{1-a\lambda_a^{\top}\overline{T}},
\end{equation}
where the first part on the right hand side can be written as 
\begin{align}
	& \frac{1}{Q} \sum_{i=1}^{Q} T_{i} \left[1- \lambda_a^{\top} T_{i} + \frac{(\lambda_a^{\top}T_i)^{2}}{1+\lambda_a^{\top}T_{i}}\right] \label{sec proof: op to Op 1} \\
	=& \frac{1}{Q} \sum_{i=1}^{Q} T_{i} - \frac{1}{Q} \sum_{i=1}^{Q} T_{i} \lambda_a^{\top} T_{i} + \frac{1}{Q} \sum_{i=1}^{Q} T_{i} \frac{(\lambda_a^{\top} T_{i})^{2}}{1+\lambda_a^{\top}T_{i}} \nonumber \\
	=& \overline{T} - \frac{1}{Q} \sum_{i=1}^{Q} T_{i} T_{i}^{\top} \lambda_a + o_{p}(n^{-1/2}). \nonumber
\end{align}
{ \color{black}
The last equality is because $ \max_{i} \|T_{i}\| = o_{p}(n^{1/2}M^{-1}) $ and $ |\lambda_{a}^{\top} T_{i}| \leq \|\lambda_{a}\| \max_{i} \|T_{i}\| = o_{p}(1) $ imply that $\lambda_{a}^{\top} T_{i} = o_{p}(1)$; and $  \frac{1}{Q} \sum_{i=1}^{Q} \|T_{i}\|^{3} = o_{p}(n^{1/2} M^{-1}) O_{p}(M^{-1}) $ since
\begin{align*}
	\frac{1}{Q} \sum_{i=1}^{Q} \|T_{i}\|^{3} \leq \max_{i} \|T_{i}\| \frac{1}{Q} \sum_{i=1}^{Q} \|T_{i}\|^{2} = o_{p}(n^{1/2} M^{-1}) O_{p}(M^{-1}). 
\end{align*}

Therefore, 
\begin{align}
\frac{1}{Q} \sum_{i=1}^{Q} T_{i} \frac{(\lambda_a^{\top} T_{i})^{2}}{1+\lambda_a^{\top}T_{i}} 
=& \frac{1}{Q} \sum_{i=1}^{Q} T_{i} (\lambda^{\top}_{a} T_{i})^{2} (1+\lambda_{a}^{\top} T_{i})^{-1}  \label{sec proof: op to Op} \\
\leq & \frac{1}{Q} \sum_{i=1}^{Q} \|T_{i}\|^{3} \|\lambda_{a}\|^{2} |1 + \lambda_{a}^{\top} T_{i}|^{-1} \nonumber \\
=&  o_{p}(n^{1/2} M^{-1}) O_{p}(M^{-1}) O_{p}(n^{-1}M^{2})  = o_{p}(n^{-1/2}). \nonumber
\end{align}
}

By using the assumption $a = o_{p}(n/M),$ the last summand in \eqref{eq: fabel lambda rate 1} is
\begin{align*}
	\frac{1}{Q} \frac{a\overline{T}}{1-a \lambda_a^{\top} \overline{T}} & = \frac{Q^{-1} o_{p}(nM^{-1})O_{p}(n^{-1/2})}{1-o_{p}(n/M) O_{p}(n^{-1}M)} \\
	& = \frac{o_{p}(n^{-1/2})}{1-o_{p}(1)}  \\
	& = o_{p}(n^{-1/2}) \text{ since } n \leq MQ.
\end{align*}
	
As a result, we have $ 0 =  \overline{T} - \frac{1}{Q} \sum_{i=1}^{Q} T_{i} T_{i}^{\top} \lambda_a + o_{p}(n^{-1/2})$. Then, we have the relationship
\begin{equation} \label{eq: adj proof A3}
	\lambda_a = M S^{-1}\overline{T} + o_{p}(n^{-1/2} M ).
\end{equation}

The final step is to Taylor expand the adjusted blockwise empirical likelihood ratio $ 2\frac{n}{MQ} \sum_{i=1}^{Q+1} \log(1+ \lambda_a^{\top} T_{i}) $. This ratio can be written as a sum of two parts
\begin{align*}
	 2\frac{n}{MQ} \sum_{i=1}^{Q+1} \log(1+\lambda_a^{\top}T_{i})  & = 2\frac{n}{MQ} \sum_{i=1}^{Q}\log(1+\lambda_a^{\top}T_{i}) + 2\frac{n}{MQ} \log(1+\lambda_a^{\top} T_{Q+1}),
\end{align*}
where the second part $ 2\frac{n}{MQ} \log(1+\lambda_a^{\top} T_{Q+1}) = o_{p}(1)$. This can be seen through a Taylor expansion, 
\begin{equation*}
    \log(1+\lambda_a^{\top} T_{Q+1}) = \lambda_a^{\top} T_{Q+1} - \frac{1}{2} (\lambda_a^{\top} T_{Q+1})^{2} + \eta ,
\end{equation*}
where for some finite $ B, P(|\eta| \leq B\norm{\lambda_a^{\top} T_{Q+1}}^{2}) \rightarrow 1 $. $ T_{Q+1} $ is defined as $ -a \overline{T} $, and from the first step in this proof, we know that $ \lambda_a = O_{p}(n^{-1/2}M) $. Therefore, $ \lambda_a ^{\top} T_{Q+1} = O_{p}(n^{-1/2}M) o(n/M)O_{p}(n^{-1/2}) = o_{p}(1) \text{ and } \eta = o_{p}(1)$.
	 
Now, a Taylor expansion of the first term gives 
\begin{align*}
	 2\frac{n}{MQ} \sum_{i=1}^{Q}\log(1+\lambda_a^{\top}T_{i}) & = 2\frac{n}{MQ} \sum_{i=1}^{Q}\left[\lambda_a^{\top} T_{i} - \frac{1}{2} (\lambda_a^{\top}T_{i})^{2} + \eta_{i}\right] \\
	 & \hspace*{-1.5cm}= 2nM^{-1} \lambda_a^{\top}\overline{T} - n M^{-1} \lambda_a^{\top} S M^{-1} \lambda_a + 2 \frac{n}{MQ} \sum_{i=1}^{Q} \eta_{i}\\
	 & \hspace*{-1.5cm}= 2n \overline{T}^{\top} S^{-1} \overline{T} -n\overline{T}^{\top} S^{-1} S S^{-1} \overline{T} + 2 \frac{n}{MQ} \sum_{i=1}^{Q} \eta_{i} + o_{p}(1)\\ 
	 &\hspace*{-1.5cm} = n \overline{T}^{\top} S^{-1} \overline{T} + o_{p}(1) \rightarrow_{d} \chi^{2}_{q},
\end{align*}
where $P\{|\eta_i| \leq \|\lambda_a^\top T_i\|^2 \} \rightarrow 1$.
This completes the proof.
\end{proof}

\begin{proof}[Proof of Theorem \ref{sec meth: adj asym chi under constraint}]
Without loss of generality, {\color{black} we assume that $\Theta_{r,0}$ is the parameter space under the following hypothesis,}
\begin{align} \label{sec meth: testing constraint 2}
    H_0: (\theta_{1}, \dotsc, \theta_{r})^\top = (\theta_{1, 0}, \dotsc, \theta_{r, 0})^\top .
\end{align}
Under this hypothesis, the unknown parameters need to be estimated are $\theta^c := (\theta_{r+1}, \dotsc, \theta_p)^\top$. In particular, the matrix $D^c := \mathbb{E} \partial g(x, \theta_0)/\partial (\theta^c)^\top$, where $\theta_0 := (\{\theta^H_0\}^\top, \{\theta_0^c\}^\top)^\top, \{\theta^H_0\}^\top = (\theta_{1,0}, \dotsc, \theta_{r,0})^\top,$ has rank $p-r$ because of assumption A.8. Then it is immediate from Theorem 1 in \cite{Kitamura1997}, that $\sqrt{n} (\hat{\theta^c} - \theta_0^c) \rightarrow_d N(0, \{(D^c)^\top S^{-1} D^c\}^{-1})$ , thus $\sqrt{n}\overline{T}(\hat{\theta}) \rightarrow_d N(0, S-D^c \{(D^c)^\top S^{-1} D^c\}^{-1} \{D^c\}^\top) $ \citep[cf. page 2098 in][]{Kitamura1997},
where $\hat{\theta}:=(\{\theta^H_0\}^\top, \{\hat{\theta^c}\}^\top)^\top$. Now by assumption A.7 and Lemma 3.2 in \cite{Kunsch1989}, following the argument of the proof of Theorem 1 in \cite{Owen1990} we have that $T^*(\hat{\theta}):=\max_{1\leq i \leq Q} \|T_i(\hat{\theta})\| = o(n^{1/2}M^{-1})$. Let $\tilde{\lambda}_a(\hat{\theta}) := \lambda_a(\hat{\theta}) / \rho$, where $\rho := \|\lambda_a(\hat{\theta})\|$. By similar calculations as in \eqref{eq: adj proof A1} and \eqref{eq: adj proof A2}, we have 
\begin{equation*}
    0 \leq \tilde{\lambda_a}(\hat{\theta})^\top \overline{T}(\hat{\theta}) - \frac{\rho}{1+\rho T^*(\hat{\theta})} \tilde{\lambda_a}(\hat{\theta})^\top \frac{1}{Q} \sum_{i=1}^Q T_i(\hat{\theta}) T_i(\hat{\theta})^\top \tilde{\lambda_a}(\hat{\theta}) + O_p(n^{-1/2}Q^{-1}a).
\end{equation*}
Since $M/Q \sum_i T_i(\hat{\theta}) T_i(\hat{\theta})^\top = S + o_p(1)$ \citep[cf. page 2097 in][]{Kitamura1997}, we have 
\begin{equation*}
    0 \leq \tilde{\lambda_a}(\hat{\theta})^\top \overline{T}(\hat{\theta}) - \frac{\rho}{M\{1+\rho T^*(\hat{\theta})\}}(1-\epsilon) \sigma_1^2\{1+o_p(1)\} + O_p(n^{-1/2}Q^{-1}a),
\end{equation*}
where $0 < \epsilon < 1$ and $\sigma_1 >0$ is the smallest eigenvalue of $S$. Then we have $\rho/\{1+\rho T^*(\hat{\theta})\} = O_p(n^{-1/2}M)$, thus $\rho = \|\lambda_a(\hat{\theta})\| = O_p(n^{-1/2}M)$. Next we notice that
\begin{equation*}
0=\frac{1}{Q} \sum_{i=1}^{Q+1} \frac{T_{i}(\hat{\theta})}{1+\lambda_{a}(\hat{\theta})^{\top} T_{i}(\hat{\theta})}=\frac{1}{Q} \sum_{i=1}^{Q} \frac{T_{i}(\hat{\theta})}{1+\lambda_{a}(\hat{\theta})^{\top} T_{i}(\hat{\theta})}+\frac{1}{Q} \frac{-a \overline{T}(\hat{\theta})}{1-a \lambda_{a}(\hat{\theta})^{\top} \overline{T}(\hat{\theta})},
\end{equation*}
where with similar calculations for Equations \eqref{sec proof: op to Op 1} and \eqref{sec proof: op to Op}, the first term is shown to be
\begin{equation*}
    \frac{1}{Q} \sum_{i=1}^{Q} \frac{T_{i}(\hat{\theta})}{1+\lambda_{a}(\hat{\theta})^{\top} T_{i}(\hat{\theta})} = \overline{T}(\hat{\theta}) - \frac{1}{Q} \sum_{i=1}^Q T_i(\hat{\theta}) T_i(\hat{\theta})^\top \lambda_a(\hat{\theta})+ o_p(n^{-1/2})
\end{equation*}
and the second term is
\begin{equation*}
\frac{1}{Q} \frac{a \overline{T}(\hat{\theta})}{1-a \lambda_{a}(\hat{\theta})^{\top} \overline{T}(\hat{\theta})} =o_{p} (n^{-1 / 2}).
\end{equation*}
As a result, we have
\begin{equation*}
    \lambda_a(\hat{\theta}) = MS^{-1}\overline{T}(\hat{\theta}) + o_p(n^{-1/2}M).
\end{equation*}
Finally, recall that $ \lambda_a(\hat{\theta}) ^{\top} T_{Q+1}(\hat{\theta}) = O_{p}(n^{-1/2}M) o(n/M)O_{p}(n^{-1/2}) = o_{p}(1)$. Writing
\begin{equation*}
    2\frac{n}{MQ} \sum_{i=1}^{Q+1} \log\{1+\lambda_a^{\top}(\hat{\theta})T_{i}(\hat{\theta})\} = 2\frac{n}{MQ} \sum_{i=1}^{Q} \log\{1+\lambda_a^{\top}(\hat{\theta})T_{i}(\hat{\theta})\} + 2\frac{n}{MQ} \log\{1+\lambda_a^{\top}(\hat{\theta})T_{Q+1}(\hat{\theta})\},
\end{equation*}
a Taylor expansion applied to the last term gives 
\begin{align*}
    2\frac{n}{MQ} \log\{1+\lambda_a(\hat{\theta})^{\top}T_{Q+1}(\hat{\theta})\} &= 2\frac{n}{MQ}[ \lambda_a(\hat{\theta})^\top T_{Q+1}(\hat{\theta}) - \frac{1}{2}\{\lambda_a(\hat{\theta})^\top T_{Q+1}(\hat{\theta})\}^2 + \eta] = o_p(1),
\end{align*}
with $\eta$ satisfying that $P\{|\eta| \leq \|\lambda_a(\hat{\theta})^\top T_{Q+1}(\hat{\theta})\|^2 \} \rightarrow 1$.
A Taylor expansion applied to first term results in 
\begin{align*}
    2\frac{n}{MQ} \sum_{i=1}^{Q} \log\{1+\lambda_a^{\top}(\hat{\theta})T_{i}(\hat{\theta})\} & = 2\frac{n}{MQ} \sum_{i=1}^{Q}\left[\lambda_a(\hat{\theta})^{\top} T_{i}(\hat{\theta}) - \frac{1}{2} \{\lambda_a(\hat{\theta})^{\top}T_{i}(\hat{\theta})\}^{2} + \eta_{i}\right] \\
	 & \hspace*{-1.5cm}= 2nM^{-1} \lambda_a(\hat{\theta})^{\top}\overline{T}(\hat{\theta}) - n M^{-1} \lambda_a(\hat{\theta})^{\top} S M^{-1} \lambda_a(\hat{\theta}) + 2 \frac{n}{MQ} \sum_{i=1}^{Q} \eta_{i}\\
	 & \hspace*{-1.5cm}= 2n \overline{T}(\hat{\theta})^{\top} S^{-1} \overline{T}(\hat{\theta}) -n\overline{T}(\hat{\theta})^{\top} S^{-1} S S^{-1} \overline{T}(\hat{\theta}) + 2 \frac{n}{MQ} \sum_{i=1}^{Q} \eta_{i} + o_{p}(1)\\ 
	 &\hspace*{-1.5cm} = n \overline{T}(\hat{\theta})^{\top} S^{-1} \overline{T}(\hat{\theta}) + o_{p}(1),
\end{align*}
where $P\{|\eta_i| \leq \|\lambda_a(\hat{\theta})^\top T_i(\hat{\theta)}\|^2 \} \rightarrow 1$.
As a consequence, we have 
\begin{align*}
    2\frac{n}{MQ} \sum_{i=1}^{Q+1} \log[1+\lambda_a^{\top}(\hat{\theta})T_{i}(\hat{\theta})] &= n \overline{T}(\hat{\theta})^{\top} S^{-1} \overline{T}(\hat{\theta}) + o_p(1) \rightarrow_d \chi_{q-p+r}^2,
\end{align*}
where the last step is because $S^{-1/2}\sqrt{n} \overline{T}(\hat{\theta}) \rightarrow_d N(0, S^{-1/2}[S-D^c \{(D^c)^\top S^{-1} D^c\}^{-1} \{D^c\}^\top] S^{-1/2})$, and it is a straightforward calculation to see that $S^{-1/2}[S-D^c \{(D^c)^\top S^{-1} D^c\}^{-1} \{D^c\}^\top] S^{-1/2}$ is idempotent with rank $q-p+r$. This completes the proof.
\end{proof}

\begin{proof}[Proof of Theorem \ref{thm: high order adj coverage}]
The foundation of the Bartlett corrected coverage probability of the empirical likelihood ratio confidence region was laid out by \cite*{DiCiccio1988} for i.i.d data. The major steps include (1) finding the signed-root decomposition of the log empirical likelihood ratio, (2) measuring the size of the third and fourth joint cumulants of the signed-root, and (3) applying the Edgeworth expansion of the density of the signed-root. For weakly dependent data, \cite{Kitamura1997} obtained the Bartlett corrected coverage error for blockwise empirical likelihood ratio confidence region under the smooth function model by following the above major steps with modifications to account for data dependency. For adjusted empirical likelihood, \cite{Liu2010} exploited the fact that the tuning parameter features in the signed-root of the adjusted empirical likelihood ratio and used it as a leverage to eliminate the large error terms to achieve a Bartlett corrected error rate. Here we will synthesize the above techniques to show that ABEL ratio confidence region under the general estimating framework can achieve higher order coverage accuracy for weakly dependent data. \textcolor{black}{As in \cite{Kitamura1997}, we assume that all required higher moments exist in this proof.}

The first step is to derive the signed-root decomposition of $ ABELR_{Q}(\theta_{0}) $. For this, we need to first establish the relationship between the Lagrange multiplier $ \lambda_{a} $ in the adjusted blockwise empirical likelihood (Equation \ref{eq: adj blockwise el}) and the Lagrange multiplier $ \lambda $ in the non-adjusted blockwise empirical likelihood \citep[cf. Equation 3.5 in][]{Kitamura1997}. Let 
\begin{equation*}
	 f(\zeta) := \frac{1}{Q}\sum_{i = 1}^{Q} \frac{T_i}{1+\zeta^{\top}T_i}.
\end{equation*}
	
Then, by definition, we have $ f(\lambda) = 0 $, and the adjusted Lagrange multiplier $ \lambda_{a} $ satisfies 
\begin{align} \label{sec proof: flambda and T bar}
	f(\lambda_{a}) = \frac{1}{Q} a \overline{T} + O_{p}(n^{-3/2}Q^{-1}M)
\end{align}
because
\begin{align*}
0 & = \frac{1}{Q} \sum_{i = 1}^{Q+1} \frac{T_i}{1 + \lambda_{a} T_i} = \frac{1}{Q} \sum_{i = 1}^{Q} \frac{T_i}{1+\lambda_{a}^{\top}T_i} + \frac{1}{Q} \frac{-a \overline{T}}{1- \lambda_{a}^{\top} a \overline{T}} \\
& = f(\lambda_{a}) - \frac{1}{Q} a \overline{T} + O_{p}(n^{-3/2}Q^{-1}M),
\end{align*}
where, with the facts that $ \overline{T} = O_{p}(n^{-1/2}) $ \citep{Kitamura1997} and $ \lambda_{a} = O_{p}(n^{-1/2}M) $ (Proof of Theorem \ref{thm: adj asymptotic chi}), the last term is
\begin{align*}
\frac{1}{Q} \frac{-a \overline{T}}{1- \lambda_{a}^{\top} a \overline{T}}  &= -\frac{a\overline{T}}{Q} [1 + \zeta ], \text{ where } |\zeta| \leq |\lambda_{a}^{\top} \overline{T}| \\
&= -\frac{1}{Q} a\overline{T} + O_{p}(n^{-3/2}Q^{-1}M).
\end{align*}
	
Next, we show that
\begin{align} \label{sec proof: t bar and lamabda relation}
\overline{T} = M^{-1}\lambda + O_{p}(Q^{-1/2}M^{-1}).
\end{align}

To see this, notice that with details given in the next Section S1, we have
\begin{align} \label{sec proof: clt 1}
\frac{1}{Q} \sum_{i=1}^{Q} T_{i} \frac{(\lambda^{\top} T_{i})^{2}}{1+\lambda^{\top}T_{i}} 
= O_{p}(n^{-1}M^{1/2}),
\end{align}
and
\begin{align} \label{sec proof: sample variance clt}
\frac{1}{Q} \sum_{i=1}^{Q} T_{i}T_{i}^{\top} = M^{-1}I + O_p(Q^{-1/2}M^{-1}).
\end{align}

Thus, with \eqref{sec proof: clt 1} and \eqref{sec proof: sample variance clt}, we have
\begin{align*} 
0  &= \frac{1}{Q} \sum_{i=1}^{Q} \frac{T_{i}}{1+\lambda^{\top} T_{i}}\\ 
&= \frac{1}{Q} \sum_{i=1}^{Q} T_{i} \left[1- \lambda^{\top} T_{i} + \frac{(\lambda^{\top}T_i)^{2}}{1+\lambda^{\top}T_{i}}\right] \nonumber \\
&= \frac{1}{Q} \sum_{i=1}^{Q} T_{i} - \frac{1}{Q} \sum_{i=1}^{Q} T_{i} \lambda^{\top} T_{i} + \frac{1}{Q} \sum_{i=1}^{Q} T_{i} \frac{(\lambda^{\top} T_{i})^{2}}{1+\lambda^{\top}T_{i}} \nonumber \\
&= \overline{T} - \frac{1}{Q} \sum_{i=1}^{Q} T_{i} T_{i}^{\top} \lambda+ O_{p}(n^{-1}M^{1/2}) \nonumber \\
&= \overline{T} - M^{-1} \lambda+ O_{p}(Q^{-1/2}M^{-1}). \nonumber
\end{align*}

Therefore, we have \eqref{sec proof: t bar and lamabda relation}, and plug it into (\ref{sec proof: flambda and T bar}) we have
\begin{align} \label{sec proof: flambda and T bar2}
f(\lambda_{a}) = \frac{1}{QM} a \lambda + O_{p}(Q^{-3/2}M^{-1}).
\end{align}

Next, by Taylor expansion, we have
\begin{equation*}
	f(\lambda_{a}) = f(\lambda) +\frac{\partial f(\lambda)}{\partial \lambda} (\lambda_{a} - \lambda) + O_p((\lambda_{a}-\lambda)^{2}),
\end{equation*}
which, since $ f(\lambda) = 0 $, gives
\begin{align}
	\lambda_{a}- \lambda & = \left(\frac{\partial f(\lambda)}{\partial \lambda}\right)^{-1} f(\lambda_{a}) + O_p((\lambda_{a}-\lambda)^{2}). \label{eq: adj proof B1}
\end{align}
	
Then we need to find $ \partial f(\lambda)/\partial \lambda $.  By assuming that higher moments exist, then by similar arguments for Equation \eqref{sec proof: clt 1}, we have
\begin{align*}
	\frac{\partial f(\lambda)}{\partial \lambda} & = -\frac{1}{Q} \sum_{i = 1}^{Q} \frac{T_iT_i^{\top}}{[1+\lambda^{\top}T_i]^{2}}\\
	& = -\frac{1}{Q} \sum_{i = 1}^{Q} T_i T_i^{\top} \left[ 1 - \lambda^{\top} T_i + \frac{(\lambda^{\top} T_i)^{2}}{1 + \lambda^{\top} T_i} \right]^{2}\\
	& = -\frac{1}{Q} \sum_{i = 1}^{Q} T_i T_i^{\top} \left[ 1 + (\lambda^{\top} T_i)^{2} + \frac{(\lambda^{\top} T_i)^{4}}{(1 + \lambda^{\top} T_i)^{2}} \right.\\
	& \left.- 2\lambda^{\top}T_i + 2\frac{(\lambda^{\top} T_i)^{2}}{1 + \lambda^{\top} T_i} - 2\frac{(\lambda^{\top} T_i)^{3}}{1 + \lambda^{\top} T_i} \right]\\
	& = - \frac{1}{Q} \sum_{i = 1}^{Q} T_i T_i^{\top} + O_{p}(n^{-1/2}M^{-1/2}).
\end{align*}

Then by \eqref{sec proof: sample variance clt}, 
\begin{align*}
	\frac{\partial f(\lambda)}{\partial \lambda} = -M^{-1} I + O_p(Q^{-1/2}M^{-1}).
\end{align*}
		
Now plug $ \partial f(\lambda)/\partial \lambda $ and \eqref{sec proof: flambda and T bar2} into \eqref{eq: adj proof B1} to get 
\begin{equation} \label{eq: adj proof B2.1}
	\lambda_{a} - \lambda = -\frac{a}{Q}  \lambda + O_{p}(Q^{-3/2}). 
\end{equation}

Then, Equation \eqref{eq: adj proof B2.1} gives 
\begin{equation} \label{eq: adj proof B3}
	\lambda_{a} = \left(1 - \frac{a}{Q} \right) \lambda + O_{p}(Q^{-3/2}).
\end{equation}	

Now, substitute $ \lambda_{a} $ into the $ ABELR_{n}(\theta_{0}) $ (\ref{eq: adj blockwise elr}), and with details for Equation \eqref{sec proof: plugin lambda a} given in the next Section S1,  we have
\begin{align}
	\frac{n}{MQ} ABELR_{n}(\theta_{0}) & = 2 \frac{n}{MQ} \sum_{i = 1}^{Q + 1} \log(1 + \lambda_{a}^{\top} T_i) \nonumber \\
	& = 2 \frac{n}{MQ} \sum_{i = 1}^{Q+1} \log \left[1 + \left(1 - \frac{a}{Q}\right) \lambda^{\top} T_i\right] + O_{p}(n^{-5/6}) \label{sec proof: plugin lambda a} \\
	& = 2 \frac{n}{MQ} \sum_{i = 1}^{Q} \log \left[1 + \left(1 - \frac{a}{Q}\right) \lambda^{\top} T_i\right] \label{eq: adj proof B4.1}  \\
	& + 2 \frac{n}{MQ} \log \left[1 + \left(1 - \frac{a}{Q}\right) \lambda^{\top} T_{n+1}\right] \label{eq: adj proof B4.2} \\
	& + O_{p}(n^{-5/6}) \nonumber.
\end{align}

Again with details given in the next Section S1, \eqref{eq: adj proof B4.1} can be written as
\begin{align} \label{sec proof: separate a}
2 \frac{n}{MQ} \sum_{i = 1}^{Q} \log \left[1 + \left(1 - \frac{a}{Q}\right) \lambda^{\top} T_i\right] 
&=  2 \frac{n}{MQ} \sum_{i = 1}^{Q} \log (1 + \lambda^{\top} T_i) \\
&+ O_{p}(n^{-5/6}). \nonumber
\end{align}

Now for \eqref{eq: adj proof B4.2}, we have
\begin{align*}
	& 2 \frac{n}{MQ} \log \left[1 + \left(1 - \frac{a}{Q}\right) \lambda^{\top}T_{n+1}\right] \\
	&=  2 \frac{n}{MQ} \left[\lambda^{\top}T_{n+1} - \frac{a}{Q} \lambda^{\top} T_{n+1} - \frac{1}{2} \zeta^2 \right], \text{ where } |\zeta| \leq \left| \left(1-\frac{a}{Q} \right) \lambda^{\top} T_{n+1} \right|\\
	&=  - 2 \frac{an}{MQ} \lambda^{\top} \overline{T} + O_{p}(Q^{-2})\\
	&=  -2 \frac{an}{M^{2}Q} \lambda^{\top} \lambda + O_{p}(n^{-5/6}),
\end{align*}
where the last equality is due to \eqref{sec proof: t bar and lamabda relation}. As a result, we have
\begin{equation*}
	\frac{n}{MQ} ABELR_{Q}(\theta_{0}) = 2 \frac{n}{MQ} \sum_{i = 1}^{Q} \log(1 + \lambda^{\top} T_{i})  -2 \frac{an}{M^{2}Q} \lambda^{\top} \lambda + O_{p}(n^{-5/6}).
\end{equation*}

{\color{black}
Now we are ready to derive the signed-root decomposition of $ ABELR_{Q}(\theta_{0}) $. To aid our derivation, we first introduce some further notations. Let
\begin{equation} \label{sec proof: A}
A^{j_{1} \cdots j_{v}} = \frac{M^{v-1}}{Q} \sum_{i = 1}^{Q} T_{i}^{j_{1}} \cdots T_{i}^{j_{v}} - \alpha^{j_{1} \cdots j_{v}},
\end{equation}
where $ \alpha^{j_{1} \cdots j_{v}} $ is defined by Equation \eqref{eq: adj moment notation} in Section \ref{sec: adj bart tune} of the main paper. By following step 2 on page 7 in \cite*{DiCiccio1988} with their $ t $ and $ X_{i} $ replaced by our $ \lambda $ and $ T_{i} $ respectively, and notice that in our weakly dependent setting, $ \lambda = O_{p}(n^{-1/2}M) $, we have the following decomposition
\begin{align} \label{eq: adj sign-root}
2 \frac{n}{MQ} \sum_{i = 1}^{Q} \log(1 + \lambda^{\top} T_{i}) &= 2n(R_{1}+R_{2}+R_{3})^{\top}(R_{1}+ R_{2} + R_{3}) + O_{p}(n^{-4/3}),
\end{align}
where, for $r, s, t = 1, \dotsc, q$
\begin{align} \label{sec proof: R}
& R_{1}^{r} = A^{r}\\
& R_{2}^{r} = \frac{1}{3} \alpha^{rst} A^{s} A^{\top} - \frac{1}{2} A^{rs} A^{s}\\
& R_{3}^{r} = \frac{3}{8} A^{rs} A^{st} A^{\top} - \frac{5}{12} \alpha^{rst} A^{tu} A^{s} A^{u} - \frac{5}{12} \alpha^{stu} A^{rs} A^{\top} A^{u} \\
& + \frac{4}{9} \alpha^{rst} \alpha^{tuv} A^{s} A^{u} A^{v} + \frac{1}{3} A^{rst} A^{s} A^{\top} - \frac{1}{4} \alpha^{rstu} A^{s} A^{\top} A^{u}.
\end{align}
	
Here the summation over repeated index is used, for example $ A^{rs}A^{s} = \sum_{s=1}^{q} A^{rs} A^{s} $. The order $ O_{p}(n^{-4/3}) $ in Equation \eqref{eq: adj sign-root} can be found by using similar arguments for Equation (S.1.1) given in the next Section S1. We use the facts that $ \sum_{i=1}^{Q} (\lambda^{\top} T_{i})^{5}= O_{p}(\|\lambda\|^{5} Sd(\sum_{i=1}^{Q} T_{i}^{j_{1}} \dotsc T_{i}^{j_{5}}))$, where $ Sd $ means standard deviation. We know that $ \lambda =O_{p}(n^{-1/2}M) $, and as in the derivation for Equation (S.1.1), we can show that $ Sd(\sum_{i=1}^{Q} T_{i}^{j_{1}} \dotsc T_{i}^{j_{5}}) = O_{p}(Q^{1/2}M^{-5/2}) $. This gives
\begin{align*}
\sum_{i=1}^{Q} \lambda^{\top} T_{i}^{5} &= O_{p}(n^{-5/2}M^{5} Q^{1/2} M^{-5/2})  \\ 
& = O_{p}(n^{-5/2} M^{5/2} Q^{5/2} Q^{-2})  \\ 
& = O_{p}(Q^{-2})  \tag{using $ MQ = O(n) $} \\
& = O_{p}(n^{-4/3}).
\end{align*}
	
Since we add an extra blockwise estimating Equation \eqref{eq: adj block extra point} in the adjusted blockwise empirical likelihood $ ABELR_Q(\theta_0) $, the signed-root decomposition will be slightly affected by the adjustment. And this is exactly where we can leverage the tuning parameter to achieve Bartlett corrected coverage error rate. With details given in the online supplement Section S1, we have the signed-root decomposition of $ ABELR_{Q}(\theta_{0}) $ as
}
\begin{align} \label{sec proof: signed root for abelr}
	& \frac{n}{MQ} ABELR_{Q}(\theta_{0}) \\
	=& 2 n(R_{1} + R_{2} + R_{3} -\frac{a}{Q} R_{1})^{\top}(R_{1} + R_{2} + R_{3} -\frac{a}{Q} R_{1}) + O_{p}(n^{-5/6}), \nonumber
\end{align}
where the $2 \frac{an}{M^{2}Q} \lambda^{\top} \lambda$ term is factored in the $\frac{a}{Q} R_1$ term. \\

The next step is to derive the joint cumulants of $ \sqrt{n}R := \sqrt{n}(R_{1} + R_{2} + R_{3} -\frac{a}{Q} R_{1})$. 

{ \color{black}	
Let $ \kappa^{r}, \kappa^{ri}, \kappa^{uvw}, \text{ and } \kappa^{ruvw}$ denote the first 4 joint cumulants of $ \sqrt{n}R $. With the assumed mixing rate $ \alpha_{X}(m) \leq c e^{-dm} $ as given in Theorem \ref{thm: high order adj coverage}, it can be seen that $ T_{i} $ is an asymptotically 1-dependent process with asymptotic approximation error $ O(e^{-d(k-1)/n^{1/3}}) $ because 
\begin{align} \label{sec proof: 1-dependent}
	\alpha_{T}(k) & \leq \alpha_{X}((k-1)M) \leq c e^{-d(k-1)M} \\
	& = c e^{-d(k-1)n^{1/3}}. \nonumber
\end{align}

This observation plays an important role in deriving the cumulants. In particular, it implies the formulas given in Step 6 in \cite*{DiCiccio1988} with their $ H^{j} $ and $ Z^{j} $ replaced by $ X $ and $ n^{-1} \sum X_{i} $ respectively, and with the correlations taken into account by $ \tilde{\alpha} $ defined by Equation \eqref{sec tuning: alpha tilde} in Section \ref{sec: adj bart tune} of the main paper. Let $ \kappa^{r}, \kappa^{ri}, \kappa^{uvw} \text{ and } \kappa^{ruvw}$ denote the first 4 joint cumulants of $ \sqrt{n}R $. 

We first show that $ \kappa^{uvw} = O_{p}(n^{-5/6})$. With the general formula of the third cumulant given in \cite{McCullagh1987TensorStatistics}  \citep*[see also step 7 in][]{DiCiccio1988}, we have:
\begin{align} \label{sec proof: third cumulant def}
	n^{-3/2} \kappa^{uvw} = \mathbb{E} (R^{u} R^{v} R^{w}) - \mathbb{E} (R^{u}) \mathbb{E} (R^{v} R^{w})[3] + 2 \mathbb{E} (R^{u}) \mathbb{E} (R^{v}) \mathbb{E} (R^{w}),
\end{align}
where $ \mathbb{E} (R^{u}) \mathbb{E} (R^{v} R^{w})[3] $ denotes a sum of $ 3 $ terms that permutes over the superscripts. For example, $ \mathbb{E} (R^{u}) \mathbb{E} (R^{v} R^{w}) + \mathbb{E} (R^{v}) \mathbb{E} (R^{u} R^{w}) + \mathbb{E} (R^{w}) \mathbb{E} (R^{u} R^{v}) $. Now plug $ R = R_{1} + R_{2} + R_{3} - a/Q R_{1} $ into Equation \eqref{sec proof: third cumulant def} and with $ T_{i} $ being 1-dependent as shown above, then after lengthy but routine calculations using formulas in Step 6 in \cite*{DiCiccio1988} that are appropriately modified as mentioned above, we have 
\begin{align} \label{sec proof: 3rd cumulant all parts}
	n^{-\frac{3}{2}} \kappa^{uvw} & = \left( 1 - \frac{a}{Q} \right)^{2} \{ \mathbb{E}(R_{1}^{u}R_{1}^{v}R_{1}^{w}) +\mathbb{E}(R_{2}^{u}R_{1}^{v}R_{1}^{w})[3] \\
	& \hspace{1.7in} - \mathbb{E}(R_{2}^{u}) \mathbb{E}(R_{1}^{v}R_{1}^{w}) [3]
	+ O(Q^{-3}M^{-2}) \}. \nonumber \\
	& = \mathbb{E}(R_{1}^{u}R_{1}^{v}R_{1}^{w}) +\mathbb{E}(R_{2}^{u}R_{1}^{v}R_{1}^{w})[3] - \mathbb{E}(R_{2}^{u}) \mathbb{E}(R_{1}^{v}R_{1}^{w}) [3] + O(Q^{-3}M^{-2}). \nonumber
\end{align}

With detailed calculations given in the Section S1, we have 
\begin{align} \label{sec proof: 3rd cumulant part 1}
	\mathbb{E}(R_{1}^{u}R_{1}^{v}R_{1}^{w}) &= M^{-2} Q^{-2} \tilde{\alpha}^{uv, w}[3] - 2M^{-2}Q^{-2} \alpha^{uvw} \\
	& + O(Q^{-2} M^{-2} e^{-d(1-r)n^{1/3}/2}),\ r <1, \nonumber
\end{align}
\begin{align} \label{sec proof: 3rd cumulant part 2}
	\mathbb{E}(R_{2}^{u}R_{1}^{v}R_{1}^{w}) &= \frac{1}{3} Q^{-2}M^{-2}\alpha^{uss} \delta^{vw} + \frac{2}{3} Q^{-2} M^{-2} \alpha^{uvw} \\
	& - \frac{1}{2} Q^{-2}M^{-2} \tilde{\alpha}^{us, s} \delta^{vw}  \nonumber \\
	& - \frac{1}{2} Q^{-2}M^{-2} \tilde{\alpha}^{uv, w} - \frac{1}{2} Q^{-2}M^{-2} \tilde{\alpha}^{uw, v} \nonumber \\
	& + O(Q^{-3}M^{-1}),  \nonumber
\end{align}
and 
\begin{align} \label{sec proof: 3rd cumulant part 3}
	\mathbb{E}(R_{2}^{u}) \mathbb{E}(R_{1}^{v}R_{1}^{w}) &= \frac{1}{3} Q^{-2}M^{-2}\alpha^{uss} \delta^{vw} - \frac{1}{2} Q^{-2}M^{-2} \tilde{\alpha}^{us, s} \delta^{vw}
\end{align}

Now plug Equations \eqref{sec proof: 3rd cumulant part 1}, \eqref{sec proof: 3rd cumulant part 2} and \eqref{sec proof: 3rd cumulant part 3} into Equation \eqref{sec proof: 3rd cumulant all parts}, we have 
\begin{align}
	\kappa^{uvw} = O(Q^{-3} M^{-1}) = O(n^{-5/6}).
\end{align}

With similar calculations as above, it can be shown the fourth cumulant is $ \kappa^{ruvw} = O(n^{-5/6}) $. 
}

The first culumant is the first moment, so $ \kappa^{r} := \sqrt{n} E(R_{1}^{r} + R_{2}^{r} + R_{3}^{r} -\frac{a}{Q} R_{1}^{r}) $, where $ 	\mathbb{E} (R_{1}^{r}) = 0 $, $ \mathbb{E} (R_{2}^{r}) = \left(1/3 \alpha^{rss} - 1/2 \tilde{\alpha}^{rs, s} \right) n^{-1}$ (shown in Section S1 Equation S1.4), and $ 	\mathbb{E} (R_{3}^{r}) = O(M^{-1} Q^{-2}) $. The first cumulant is then
\begin{equation*}
    \kappa^{r} = \left(\frac{1}{3} \alpha^{rss}  - \frac{1}{2} \tilde{\alpha}^{rs, s} \right) n^{-1/2} + O(n^{-1/2}Q^{-1}).
\end{equation*}	
	
The second cumulant $ \kappa^{ri} $ is more complex, but fortunately the adjusted signed-root $ R $ is different from the non-adjusted signed-root $ \mathcal{R} := R_{1} +R_{2} + R_{3} $ by just a factor of $ \frac{a}{Q} R_{1} $. This will allow us to modify the Bartlett factor formula in \cite{Kitamura1997} for $ \mathcal{R} $ because $ R = \mathcal{R} - \frac{a}{Q}R_{1}$. With this notation, the second cumulant can be written in a relatively simple form as
\begin{align*}
	\kappa^{ri} & := n Cov(R^{r}, R^{i}) \\
	& = n Cov(\mathcal{R}^{r}, \mathcal{R}^{i}) - 2n Cov\left(\mathcal{R}^{r}, \frac{a}{Q} R_{1}^{i}\right) + n Cov \left( \frac{a^{2}}{Q^{2} } R_{1}^{r}, R_{1}^{i} \right).
\end{align*}
	
For the first term, we have
\begin{align*}
	nCov(\mathcal{R}^{r}, \mathcal{R}^{i}) & = n E(\mathcal{R}^{r} \mathcal{R}^{i}) - nE(\mathcal{R}^{r}) E(\mathcal{R}^{i}) \\
	& = nE(R_{1}^{r} R_{1}^{i}) + n^{-1} a_{ri} + O(n^{-5/6}) \\
	& =  \delta_{ri} + n^{-1} a_{ri} - n^{-1} D + O(n^{-5/6}) \\
	& = \delta_{ri} + n^{-1} ( a_{ri} - D )+ O(n^{-5/6}), 
\end{align*}
where $ a_{ri} $ is given in Section \ref{sec: adj bart tune} Equation \eqref{eq: adj ari} and $ D := (1/3 \alpha^{iss} - 1/2\tilde{\alpha}^{is,s}) (1/3 \alpha^{jss} - 1/2\tilde{\alpha}^{js,s})$. For the second and the third term, it can be shown by the calculations given in Section S1 that
\begin{equation} \label{eq: proof cummu 2}
    nCov(\mathcal{R}^{r}, \frac{a}{Q} R_{1}^{i}) = \frac{a}{Q} \delta_{ri} + O(n^{-1}Q^{-1})
\end{equation}
and 
\begin{equation} \label{eq: proof cummu 3}
    nCov(\frac{a^2}{Q^2} R_{1}^{r},  R_{1}^{i}) = O(n^{-4/3}).
\end{equation}

As a result, 
\begin{align*}
	\kappa^{ri} = \delta_{ri} + \left[a_{ri} - D - 2 a \delta_{ri} \frac{n}{Q}\right] n^{-1} + O(n^{-5/6}).
\end{align*}

Notice that $ \sqrt{n}R $ is a smooth vector-valued function of $ A^{j_{1} \cdots j_{v}} $, which are centered sample moments of blockwise data $ T_{i} $. According to \cite{Davison1993}, the validity of Edgeworth expansion for blocks of weakly dependent data can be established under conditions mentioned in the beginning of Section \ref{sec: adj bart tune} by following similar arguments given in \cite{Bhattacharya1978}. \cite{Kitamura1997} used similar arguments for blockwise empirical likelihood. See also \cite{Lahiri1991SecondBootstrap} and \cite{Lahiri1996OnModels} for the validity of Edgeworth expansion with blockwise weekly dependent data. The Edgeworth expansion can be obtained through its formal Edgeworth expansion using the cumulants. The number of cumulants needed depends on the order in the expansion. For the order $ O(n^{-5/6}) $ in our expansion, we need the first four cumulants, which are calculated above. 

With detailed calculations given in Section S1, we have the following formal Edgeworth expansion for the density of 
$ \sqrt{n} R $ as
\begin{equation} \label{sec proof: formal edgeworth expansion}
    f_{R} (x) = \phi(x) + n^{-1/2} W_{1}(x) + n^{-1} W_{2}(x) + O(n^{-5/6}),
\end{equation}
where 	
\begin{align*}
	W_{1}(x) & = \left(\frac{1}{3} \alpha^{iss}  - \frac{1}{2} \tilde{\alpha}^{is, s}\right) x^{i} \phi(x)
\end{align*}
and
\begin{equation*}
    W_{2}(x) = \frac{1}{2} \left(a_{ij} -  2 a \delta_{ij} \frac{n}{Q} \right) (x^{i} x^{j} - \delta_{ij}) \phi(x)
\end{equation*}
with $ a_{ij} $ given by Equation \eqref{eq: adj ari} in Section \ref{sec: adj bart tune} of the main paper and $ \delta_{ij} $ being the Kronecker delta. Notice that $W_2(x) = O(M)$, thus $n^{-1} W_2(x) = O(n^{-2/3})$. With the above expansion, we then have
\begin{equation*}
    P(nR^{\top}R \leq x) = \int_{z^{\top}z \leq x} \left[\phi(z) + \sum_{i = 1}^{2} n^{-i/2} W_{i}(z)\right]  dz + O(n^{-5/6}).
\end{equation*}	
	
Note that $ W_{1} $ is an odd function, thus it integrates to 0 over the region $ z^{\top}z \leq x $. The only term left is $ W_{2} $. But for $ i \neq j, W_{2} $ is also odd over the region $ z^{\top}z \leq x $, thus we only need to consider $ W_{2}(x) = \frac{1}{2} (a_{ii} - 2q \frac{n}{Q} a)(x^i x^i - q) \phi(x) $. That is
\begin{align*}
	\int_{z^{\top}z \leq x} W_{2}(z) dz & = \int_{z^{\top}z \leq x}\frac{1}{2} \left(a_{ij} -  2 a \delta_{ij} \frac{n}{Q} \right) (z^{i} z^{j} - \eta_{ij}) \phi(x) dz \\
	& = \frac{1}{2} \left(a_{ii} -  2 a q \frac{n}{Q} \right) \int_{z^{\top}z \leq x}(z^{i} z^{i} - q) \phi(z) dz.
\end{align*}
	
Therefore, by letting
\begin{equation*}
    a = \frac{1}{2} \frac{Q}{n} \frac{1}{q} a_{ii},
\end{equation*}
the $ n^{-1} W_{2}$ term vanishes. As a result,
\begin{equation*}
    P(nR^{\top}R) \leq x) = P(\chi^{2}_{q} \leq x) + O(n^{-5/6}).
\end{equation*}	
This completes the proof.
\end{proof}

\section*{Supplementary Materials}

Detailed calculations in the proofs and the complete simulation results are available in the online supplement. R code used for the simulation given in Section \ref{sec: adj simulation} and data application in Section \ref{sec: applicatioin} are available on github \url{https://github.com/kwgit/ABEL_companion_code}.

\par

\par


\bibhang=1.7pc
\bibsep=2pt
\fontsize{9}{14pt plus.8pt minus .6pt}\selectfont
\renewcommand\bibname{\large \bf References}
\expandafter\ifx\csname
natexlab\endcsname\relax\def\natexlab#1{#1}\fi
\expandafter\ifx\csname url\endcsname\relax
  \def\url#1{\texttt{#1}}\fi
\expandafter\ifx\csname urlprefix\endcsname\relax\def\urlprefix{URL}\fi

\bibliographystyle{chicago}      
\bibliography{ABEL.bib}   

\end{document}